\newtheorem{theorem}{Theorem}[section]
\newtheorem{definition}{Definition}[section]
\newtheorem{example}{Example}[section]
\newcommand{\ensm}[1]{\ensuremath{#1}\xspace}
\newcommand{\function}[2]{\ensm{\mathsf{#1}(#2)}}
\newcommand{\lrangle}[1]{\ensm{\langle#1\rangle}}
\newcommand{\domain}[1]{\function{dom}{#1}}
\newcommand{\set}[1]{\ensm{\{#1\}}\xspace}
\newcommand{\setbar}{\ensm{\ |\ }\xspace}
\newcommand{\es}{\ensm{\emptyset}}
\newcommand{\maparrow}{\ensm{\mapsto}}
\newcommand{\tree}[2]{
    \ensm {
        \displaystyle
        \frac{
            \begin{array}{c}
            #1
            \end{array}
        }{
            #2
        }
    }
}
\newcommand{\mysf}[2][\displaystyle]{\ensm{{#1\mathsf{#2}}}}
\newcommand{\mysfs}[1]{\mysf[\scriptstyle]{#1}}
\newcommand{\mytsf}[2][\normalfont]{\textsf{#1 #2}}
\newcommand{\mytt}[1]{\ensm{\mathtt{#1}}}
\newcommand{\setdef}[1]{\ensm{\mathcal{#1}}}
\newcommand{\iCPSDL}{\textsf{iCPS-DL}\xspace}
\newcommand{\est}{\participant{est}}
\newcounter{marginc}
\newcommand{\dkcol}[1]{{#1}}
\newcommand\tstrut{\rule{0pt}{2.6ex}}         
\newcommand\bstrut{\rule[-1.4ex]{0pt}{0pt}}   
\newcommand{\iCPS}{\mysf{iCPS}}
\newcommand{\plant}{\ensm{P}} 
\newcommand{\IIoT}{\mysf{IIoT}}
\newcommand{\IT}{\mysf{IT}}
\newcommand{\pstate}[1][]{\ifthenelse{\equal{#1}{}}{\ensm{X}}{\ensm{x_{#1}}}}
\newcommand{\pstated}[1][]{\ifthenelse{\equal{#1}{}}{\ensm{X'}}{\ensm{x'_{#1}}}}
\newcommand{\estate}[1][]{\ifthenelse{\equal{#1}{}}{\ensm{\hat{X}}}{\ensm{\hat{x}_{#1}}}}
\newcommand{\psignal}[1][]{\ifthenelse{\equal{#1}{}}{\ensm{W}}{\ensm{w_{#1}}}}
\newcommand{\psignald}[1][]{\ifthenelse{\equal{#1}{}}{\ensm{W'}}{\ensm{w'_{#1}}}}
\newcommand{\pars}[1][]{\ifthenelse{\equal{#1}{}}{\ensm{Z}}{\ensm{z_{#1}}}}
\newcommand{\property}[1][]{\ensm{\pi_{#1}}}
\newcommand{\Properties}{\ensm{\Pi}}
\newcommand{\properties}[1][]{\ensm{\mathsf{pr}_{#1}}}
\newcommand{\efunction}[1][]{\ensm{\phi_{#1}}}
\newcommand{\emodel}{\ensm{\Phi}}
\newcommand{\estimators}[1][]{\ensm{f_{#1}}}
\newcommand{\Classes}{\ensm{\mathcal{K}}}
\newcommand{\class}[1][]{\ensm{k_{#1}}}
\newcommand{\nvector}[1][]{\ifthenelse{\equal{#1}{}}{\ensm{N}}{\ensm{v_{#1}}}}
\newcommand{\avector}[1][]{\ifthenelse{\equal{#1}{}}{\ensm{\mathrm{A}}}{\ensm{a_{#1}}}}
\newcommand{\MQTT}{\mysf{MQTT}}
\newcommand{\reasoning}[1][]{\ifthenelse{\equal{#1}{}}{\ensm{r}}{\function{\ensm{r}}{#1}}}
\newcommand{\inDomain}{\setdef{O}}
\newcommand{\process}{\ensm{P}} 
\newcommand{\GG}{\ensm{G}} 
\newcommand{\VV}[1][]{\ensm{V^{#1}}}
\newcommand{\EE}[1][]{\ensm{E^{#1}}}
\newcommand{\Sensors}[1][]{\ensm{\setdef{S}^{#1}}}
\newcommand{\Actuators}{\ensm{A}}
\newcommand{\ActuatorClasses}{\ensm{\mathcal{A}}}
\newcommand{\Hardware}[1][]{\ensm{H^{#1}}}
\newcommand{\h}{\mysf{\delta}}
\newcommand{\AG}[1][]{\ensm{F^{#1}}}
\newcommand{\translation}[1][]{\ifthenelse{\equal{#1}{}}{\ensm{\tau}}{\function{\ensm{\tau}}{#1}}}
\newcommand{\statemap}[1][]{\ifthenelse{\equal{#1}{}}{\ensm{\phi}}{\function{\ensm{\phi}}{#1}}}
\newcommand{\egraph}[1][]{\ensm{\mysf{G}_{#1}}}
\newcommand{\eV}[1][]{\ensm{\mysf{V}^{#1}}}
\newcommand{\eE}[1][]{\ensm{\mysf{E}^{#1}}}
\newcommand{\argt}[1][]{\ifthenelse{\equal{#1}{}}{\mysf{argt}}{\function{\mysf{argt}}{#1}}}
\newcommand{\fgraph}[1][]{\ensm{\mysf{F}_{#1}}}
\newcommand{\fmap}[1][]{\ifthenelse{\equal{#1}{}}{\mysf{fmapping}}{\function{\mysf{fmapping}}}{#1}}
\newcommand{\iiotconfiguration}[1][]{\ifthenelse{\equal{#1}{}}{\mysf{conf}}{\function{conf}{#1}}}
\newcommand{\modelmap}[1][]{\ifthenelse{\equal{#1}{}}{\ensm{\mu}}{\function{\ensm{\mu}}{#1}}}
\newcommand{\modelmapd}[1][]{\ifthenelse{\equal{#1}{}}{\ensm{\mu'}}{\function{\ensm{\mu'}}{#1}}}
\newcommand{\contextmap}[1][]{\ifthenelse{\equal{#1}{}}{\ensm{\kappa}}{\function{\ensm{\kappa}}{#1}}}
\newcommand{\rolemap}[1][]{\ifthenelse{\equal{#1}{}}{\ensm{\kappa^{\circ}}}{\function{\kappa^{\circ}}{#1}}}
\newcommand{\rolemapi}[2][]{\ifthenelse{\equal{#1}{}}{\function{\kappa^{\circ}}{#2}}{\function{\kappa_{#1}^{\circ}}{#2}} }
\newcommand{\WDN}{\setdef{WDN}}
\newcommand{\wdnprocess}{\mysf{WDN}}
\newcommand{\reservoir}{\mysf{reservoir}}
\newcommand{\junction}{\mysf{junction}}
\newcommand{\tank}{\mysf{tank}}
\newcommand{\pump}{\mysf{pump}}
\newcommand{\flow}{\mysf{flow}}
\newcommand{\head}{\mysf{head}}
\newcommand{\shape}{\mysf{shape}}
\newcommand{\mass}{\mysf{mass}}
\newcommand{\tankmass}{\mysf{tmass}}
\newcommand{\junctionmass}{\mysf{jmass}}
\newcommand{\energy}{\mysf{energy}}
\newcommand{\tmass}{\participant{t}}
\newcommand{\dev}[1][]{\ensm{\mathsf{dev_{#1}}}}
\newcommand{\cntrl}{\participant{controller}}
\newcommand{\link}{\mysf{link}}
\definecolor{mygreen}{rgb}{0,0.6,0}
\definecolor{myblue}{rgb}{0.4,0.5,0}
\definecolor{mygray}{rgb}{0.5,0.5,0.5}
\definecolor{mymauve}{rgb}{0.58,0,0.82}
\newcommand{\Participants}{\setdef{{\color{mymauve} P}}}
\newcommand{\participant}[1]{\ensm{{\color{mymauve}\mathsf{#1}}}}
\newcommand{\p}[1][]{\participant{p_{#1}}}
\newcommand{\q}[1][]{\participant{q_{#1}}}
\newcommand{\rr}[1][]{\participant{r_{#1}}}
\newcommand{\ap}[1][]{\ifthenelse{\equal{#1}{}}{\mysf{\participant{P}}}{\p[#1]}}
\newcommand{\aq}[1][]{\ifthenelse{\equal{#1}{}}{\mysf{\participant{Q}}}{\q[#1]}}
\newcommand{\ar}[1][]{\ifthenelse{\equal{#1}{}}{\mysf{\participant{R}}}{\rr[#1]}}
\newcommand{\Types}{\setdef{U}}
\newcommand{\type}[1]{\mytt{#1}}
\newcommand{\U}[1][]{\ensm{\mathsf{U_{#1}}}}
\newcommand{\nat}{\type{nat}}
\newcommand{\bool}{\type{bool}}
\newcommand{\ON}{\mytt{ON}}
\newcommand{\OFF}{\mytt{OFF}}
\newcommand{\alphabet}{\ensm{\Sigma}}
\newcommand{\calphabet}{\ensm{\alphabet_{\mysf{c}}}}
\newcommand{\element}[1][]{\ensm{\sigma_{#1}}}
\newcommand{\outputSymbol}{\ensm{!}}
\newcommand{\inputSymbol}{\ensm{?}}
\newcommand{\passSymbol}{\ensm{\shortrightarrow}}
\newcommand{\actOut}[2]{\ensm{{#1}_{\outputSymbol} #2}}
\newcommand{\actInp}[2]{\ensm{{#1}_{\inputSymbol} #2}}
\newcommand{\act}[3]{\ensm{#1\passSymbol#2:#3}}
\newcommand{\sep}{\ensm{.}}
\newcommand{\Local}{\setdef{{\color{olive} T}}}
\newcommand{\local}[1][]{\ensm{{\color{olive}T_{#1}}}}
\newcommand{\locald}[1][]{\ensm{{\color{olive}T_{#1}'}}}
\newcommand{\tor}{\ensm{{\color{olive} +}}}
\newcommand{\Tor}[1]{\ensm{{\color{olive}\textstyle \sum_{#1}}}}
\newcommand{\tloop}{\ensm{\mysf{\color{olive} loop}}}
\newcommand{\trec}[1]{\ensm{\tloop\;\tvar{#1}.}}
\newcommand{\tvar}[1]{\mysf{{\color{olive} #1}}}
\newcommand{\tinact}{\ensm{\mathbf{{\color{olive} end}}}}
\newcommand{\nout}[2]{\actOut{#1}{#2}}
\newcommand{\out}[2]{\ensm{\nout{#1}{#2}\sep}}
\newcommand{\ninp}[2]{\actInp{#1}{#2}}
\newcommand{\inp}[2]{\ensm{\ninp{#1}{#2}\sep}}
\newcommand{\participantsf}{\mysf{p}}
\newcommand{\participants}[1][]{\ifthenelse{\equal{#1}{}}{\participantsf}{\function{\participantsf}{#1}}}
\newcommand{\substff}{\ensm{\Leftarrow}}
\newcommand{\substt}[2]{\ensm{[{#2} \substff {#1}]}}
\newcommand{\by}[1][]{\ensm{\stackrel{#1}{\longrightarrow}\xspace}}
\newcommand{\LSet}{\ensm{\setdef{R}}}
\newcommand{\LL}[1][]{\ensm{S_{#1}}}
\newcommand{\LLd}[1][]{\ensm{S_{#1}'}}
\newcommand{\activef}{\mysf{ap}}
\newcommand{\activep}[1]{\function{\activef}{#1}}
\newcommand{\live}[1]{\function{live}{#1}}
\newcommand{\df}[1]{\function{df}{#1}}
\newcommand{\Glob}{\setdef{{\color{blue} G}}}
\newcommand{\glob}[1][]{\ensm{{\color{blue} G_{#1}}}}
\newcommand{\globd}[1][]{\ensm{{\color{blue} G_{#1}'}}}
\newcommand{\gloop}{\ensm{\mysf{\color{blue} loop}}}
\newcommand{\gor}{\ensm{{\color{blue} +}}}
\newcommand{\Gor}[1]{\ensm{{\color{blue} \textstyle \sum_{#1}}}}
\newcommand{\ginact}{\ensm{\mathbf{{\color{blue} end}}}}
\newcommand{\rec}[1]{\ensm{\gloop\;\var{#1}.}}
\newcommand{\var}[1]{\mysf{{\color{blue} #1}}}
\newcommand{\npass}[3]{\act{#1}{#2}{#3}}
\newcommand{\pass}[3]{\ensm{\npass{#1}{#2}{#3}\sep}}
\newcommand{\proves}{\ensm{\vdash}}
\newcommand{\size}[1][\cdot]{\ensm{|#1|}}
\newcommand{\simple}{{\color{mymauve} \mysf[\scriptstyle]{sim}}}
\newcommand{\WDNsimple}{\mysf{WDN_{\simple}}}
\newcommand{\treesimple}[1][]{\ensm{\mysf{T}_{\simple}^{#1}}}
\newcommand{\globsimple}[1][]{\ensm{\glob^{#1}_{\simple}}}
\newcommand{\LLsimple}[1][]{\ensm{\LL[\simple]^{#1}}}
\newcommand{\controller}[1][]{\participant{c_{\scriptstyle #1}}}
\newcommand{\sensor}[1][]{\participant{s_{\scriptstyle #1}}}
\newcommand{\supervisor}[1][]{\participant{v_{#1}}}
\newcommand{\pumpp}[1][]{\participant{u_{#1}}}
\newcommand{\estp}[1][]{\ensm{\participant{e}_{\scriptstyle \mathsf{#1}}}}
\newcommand{\estpp}[1][]{\ensm{\participant{est}_{\scriptstyle \mathsf{#1}}}}
\newcommand{\junctionpp}[1][]{\ensm{\participant{jun}_{\scriptstyle \mathsf{#1}}}}
\newcommand{\linkpp}[1][]{\ensm{\participant{link}_{\scriptstyle \mathsf{#1}}}}
\newcommand{\levelSensor}[1][]{\participant{s_{h}^{#1}}} 
\newcommand{\flowSensor}[1][]{\participant{s_{f}^{#1}}}
\newcommand{\consumerp}[1][]{\participant{cn_{#1}}}
\newcommand{\producerp}[1][]{\participant{pr_{#1}}}
\newcommand{\headt}{\type{head}}
\newcommand{\flowt}{\type{flow}}
\newcommand{\init}{\type{init}}
\newcommand{\localflow}{\local[\mathsf{f}]}
\newcommand{\localhead}{\local[\mathsf{h}]}
\newcommand{\localest}{\local[\mathsf{e}]}
\newcommand{\localpump}{\local[\mathsf{u}]}
\newcommand{\localjun}{\local[\mathsf{j}]}
\newcommand{\locallink}{\local[\mathsf{l}]}
\definecolor{flowC}{RGB}{0, 166, 148}
\definecolor{headC}{RGB}{220,183,4}
\definecolor{constantC}{RGB}{38,76,90}
\definecolor{junctionC}{RGB}{255, 153, 102}
\definecolor{massC}{RGB}{25, 118, 124}
\definecolor{energyC}{RGB}{156, 45, 7}
\definecolor{darkgreen}{RGB}{0, 90, 0}
\definecolor{darkyellow}{RGB}{139,117,0}
\tikzstyle{common-attributes}=[very thick]
\tikzstyle{pstate}=[regular polygon, regular polygon sides=5, common-attributes, inner sep=2]
\tikzstyle{level}=[draw=headC, fill=white]
\tikzstyle{flow}=[draw=flowC, fill=white]
\tikzstyle{head}=[draw=headC, fill=white]
\tikzstyle{constant}=[draw=black, fill=white]
\tikzstyle{tankhead}=[draw=headC, fill=white]
\tikzstyle{estimator} = [rectangle, common-attributes, inner sep=2.5]
\tikzstyle{jmass} = [draw=junctionC, fill=white]
\tikzstyle{tmass}=[draw=massC, fill=white]
\tikzstyle{energy}=[draw=energyC, fill=white]
\tikzstyle{sensor}=[circle, common-attributes, inner sep=2]
\tikzstyle{actuator}=[isosceles triangle, isosceles triangle apex angle=60, rounded corners=1pt, shape border rotate=270, common-attributes, inner sep=1.5]
\tikzstyle{plc}=[rectangle, rounded corners=2pt, fill=white, dashed, draw=black!40]
\tikzstyle{agent}=[rectangle, draw, common-attributes, inner sep=2]
\tikzstyle{controller}=[rectangle, rotate=45, common-attributes, draw=green, fill=white, inner sep=2]
\tikzstyle{reservoir}=[circle,thick,draw=cyan!75,fill=cyan!20,inner sep=2]
\tikzstyle{pump}=[rectangle,thick,draw=black!75,fill=black!20,inner sep=2.5]
\tikzstyle{valve}=[circle,thick,draw=orange!75,fill=orange!20,inner sep=2]
\tikzstyle{tank}=[circle,thick,draw=blue!75,fill=blue!20,inner sep=2]
\tikzstyle{junction}=[circle,thick,draw=red!75,fill=red!20,inner sep=2]
\tikzstyle{pipe}=[rectangle,thick,draw=green!75,fill=green!20,inner sep=2.5]
\tikzstyle{spoint}=[isosceles triangle, isosceles triangle apex angle=60, rounded corners=1pt, shape border rotate=270, common-attributes, inner sep=1.5]
\tikzstyle{iiot-cloud}=[rounded corners=2pt, fill=black!35!green!20,draw=green]
\tikzstyle{iiot-superv}=[rounded corners=2pt, fill=black!20!yellow!20,draw=yellow]
\tikzstyle{bus}=[line width=0.1mm, orange, double=yellow,double distance=0.25mm]
\tikzstyle{communication}=[rectangle, rounded corners=2pt, draw=blue!50, fill=blue!15]
\newcommand{\pipeGraph}[4] {
    \node[above]            at  (#1, #2 + 2) {#4};
    \draw[dotted]               (#1, #2) rectangle (#1 + 1, #2 + 2);
    \node[constant, pstate]
                            at  (#1 + 0.5, #2 + 0.25) (sh#3) {};
    \node[flow, pstate]
                            at  (#1 + 0.5, #2 + 1.75) (fl#3) {};

    \node[energy, estimator]
                            at  (#1 + 0.5, #2 + 1) (ag#3) {}
                                edge[post] (fl#3)
                                edge[pre] (sh#3);
}
\newcommand{\junctionGraph}[4] {
    \node[above]            at  (#1, #2 + 2) {#4};
    \draw[dotted]               (#1, #2) rectangle (#1 + 1, #2 + 2);

    \node[jmass, estimator]
                            at  (#1 + 0.5, #2 + 1.75) (fl#3) {};
    \node[head, pstate]
                            at  (#1 + 0.5, #2 + 0.25) (pr#3) {};
    \node[flow, pstate]
                            at  (#1 + 0.5, #2 + 1) (dem#3) {}
                                edge[pre] (fl#3);
}
\newcommand{\tankGraph}[4] {
    \node[above]            at  (#1, #2 + 2) {#4};
    \draw[dotted]               (#1, #2) rectangle (#1 + 1, #2 + 2);

    \node[constant, pstate]
                            at  (#1 + 0.5, #2 + 1.75) (sh#3) {};
    \node[tankhead, pstate]
                            at  (#1 + 0.5, #2 + 0.25) (pr#3) {};

    \node[tmass, estimator]
                            at  (#1 + 0.5, #2 + 1) (ag#3) {}
                                edge[pre] (sh#3)
                                edge[post] (pr#3);
}
\newcommand{\pumpGraph}[4] {
    \node[above]             at (#1, #2 + 2) {#4};
    \draw[dotted]               (#1, #2) rectangle (#1 + 1, #2 + 2);

    \node[constant, pstate]
                            at  (#1 + 0.5, #2 + 0.25) (sh#3) {};
    \node[flow, pstate]     at  (#1 + 0.5, #2 + 1.75) (fl#3) {};

    \node[energy, estimator]
                            at  (#1 + 0.5, #2 + 1) (ag#3) {}
                                edge[post] (fl#3)
                                edge[pre] (sh#3);
}
\newcommand{\reservoirGraph}[4] {
    \node[above]        at  (#1, #2 + 1) {#4};
    \draw[dotted]           (#1, #2) rectangle (#1 + 1, #2 + 1);

    \node[head, pstate]
                        at  (#1 + 0.5, #2 + 0.5) (pr#3) {};
}
\tikzstyle{vertex} = [circle, very thick, draw = red, fill=red!20,inner sep = 0, minimum size = 0.2cm]
\newcommand{\drawPump}[5] {
    \node[jun] 
                    at  (#1, #2)  (#4) {};

    \draw           (#1 + 0, #2 + #3) arc (90:390:#3);
    \draw           (#1 + 0, #2 + #3) -- (#1 + #3 + #3/2, #2 + #3) --
                    (#1 + #3 + #3/2, #2 + #3/2) -- (#1 + #3*0.866, #2 + #3/2);
    \node[above]    at  (#1, #2 + #3) {#5};
}
\tikzstyle{jun}=[circle,thick, fill, inner sep=0, minimum size = 0.2cm]
\tikzstyle{cntr}=[circle,thick, fill, inner sep=0,minimum size = 0.2cm]
\definecolor{mGreen}{rgb}{0,0.6,0}
\definecolor{mGray}{rgb}{0.5,0.5,0.5}
\definecolor{mPurple}{rgb}{0.58,0,0.82}
\definecolor{backgroundColour}{rgb}{0.95,0.95,0.92}
\lstdefinestyle{hask}{
	commentstyle=\color{magenta},
	numberstyle=\tiny\color{mGray},
	stringstyle=\color{mPurple},
	basicstyle=\scriptsize\ttfamily,
	breakatwhitespace=false,
	breaklines=true,
	captionpos=b,
	keepspaces=true,
	numbers=none,
	showspaces=false,
	showstringspaces=false,
	showtabs=false,      
	tabsize=2,
	keywordstyle=[1]\color{olive},
	keywords=[1]{if, then, else, forall},
	escapeinside={(*}{*)}
}
\lstdefinelanguage{CSP-DL}{
	commentstyle=\color{magenta},
        morecomment=[l]{\#},
	numberstyle=\tiny\color{mGray},
	stringstyle=\color{mPurple},
	basicstyle=\scriptsize\ttfamily,
	breakatwhitespace=false,
	breaklines=true,
	captionpos=b,
	keepspaces=true,
	numbers=none,
	showspaces=false,
	showstringspaces=false,
	showtabs=false,      
	tabsize=2,
	keywordstyle=[1]\color{olive},
	keywords=[1]{property, model, translation, physical, actuator, sensor, device, conn, sense, control, actuate, using, estimate},
        keywordstyle=[2]\color{magenta},
        keywords=[2]{paradigm, process, knowledge, base, local, global, translate, traverse, configure, compose, project},
	escapeinside={(*}{*)}
}
\newcommand{\iCPSDLinline}[1]{\lstinline[language=CSP-DL,basicstyle=\ttfamily\small]{#1}}
\lstdefinelanguage{iCPSDL}{
	commentstyle=\color{magenta},
        morecomment=[l]{\#},
	numberstyle=\tiny\color{mGray},
	stringstyle=\color{mPurple},
        basicstyle=\scriptsize\ttfamily,
	breakatwhitespace=false,
	breaklines=true,
	captionpos=b,
	keepspaces=true,
	numbers=none,
	showspaces=false,
	showstringspaces=false,
	showtabs=false,      
	tabsize=2,
	keywordstyle=[1]\color{olive},
	keywords=[1]{property, model, translation, physical, actuator, sensor, device, conn, sense, control, actuate, using, estimate, or},
        keywordstyle=[2]\color{blue},
        keywords=[2]{domain, process, repository, local, global, translate, traverse, configure, compose, project},
	escapeinside={(*}{*)}
}
\newcommand{\iCPSDLinl}[1]{\lstinline[language=iCPSDL,basicstyle=\ttfamily\small]{#1}}
\lstdefinelanguage{antlr4}{
	commentstyle=\color{magenta},
        morecomment=[l]{\#},
	numberstyle=\tiny\color{mGray},
	stringstyle=\color{mPurple},
        basicstyle=\scriptsize\ttfamily,
	breakatwhitespace=false,
	breaklines=true,
	captionpos=b,
	keepspaces=true,
	numbers=none,
        upquote=true,
	showspaces=false,
	showstringspaces=false,
	showtabs=false,      
	tabsize=2,
	escapeinside={(*}{*)}
}
\newcommand{\antlrinl}[1]{\lstinline[language=antlr4,basicstyle=\ttfamily\small]{#1}}
\newcommand{\AppendixSessions}{Appendix~A\xspace}
\newcommand{\AppendixWDN}{Appendix~B\xspace}
\newcommand{\AppendixWDNDomain}{Appendix~C\xspace}
\newcommand{\TheoremPolynomial}{Thm.~A.3 in the Appendix\xspace}
\newcommand{\DefinitionComposition}{Def.~A.1 in the Appendix\xspace}
\newcommand{\TheoremLive}{Thm.~A.2 in the Appendix\xspace}
\title{
iCPS-DL: A Description Language for\\ Autonomic Industrial Cyber-Physical Systems
} 
\author{
    \IEEEauthorblockN{Dimitrios Kouzapas\IEEEauthorrefmark{1},
    Christos Panayiotou\IEEEauthorrefmark{1}\IEEEauthorrefmark{2}},
    Demetrios G. Eliades\IEEEauthorrefmark{1}
    \\
    \IEEEauthorblockA{\IEEEauthorrefmark{1}KIOS Research and Innovation Centre of Excellence, University of Cyprus, Cyprus}
    \\
    \IEEEauthorblockA{\IEEEauthorrefmark{2}Department of Electrical and Computer Engineering, University of Cyprus, Cyprus}

    \thanks{This work has received funding from the European Union’s Horizon Europe research and innovation programme under grant agreement No. 958478 (EnerMan) and supported by the European Union Horizon 2020 program Teaming under Grant Agreement No. 739551 (KIOS CoE) and the Government of the Republic of Cyprus through the Deputy Ministry of Research, Innovation and Digital Policy.}
}
\begin{document}

    \maketitle

    \begin{abstract}
        Modern industrial systems require frequent updates to their cyber and physical infrastructures, often demanding considerable reconfiguration effort. This paper introduces the \textit{industrial Cyber-Physical Systems Description Language}, \iCPSDL, which enables autonomic reconfigurations for industrial Cyber-Physical Systems. The \iCPSDL maps an industrial process using semantics for physical and cyber-physical components, a state estimation model, and agent interactions. A novel aspect is using communication semantics to ensure {\em live} interaction among distributed agents. Reasoning on the semantic description facilitates the configuration of the industrial process control loop. A Water Distribution Networks domain case study demonstrates \iCPSDL's application.
    \end{abstract}

    \begin{IEEEkeywords}
        description language, cyber-physical systems, self-reconfiguration, ontologies, semantics  
    \end{IEEEkeywords}


\section{Introduction}


Industrial Internet of Things (\IIoT) ~\cite{Sisinni20184724} drives Industry 4.0 by optimising key performance indicators through the addition, update, or removal of assets within the industrial ecosystem, often leading to the reconfiguration of the underlying Cyber-Physical System (CPS).
Moreover, as an industrial process scales in terms of size and capabilities, there is an increased possibility for events that disrupt its normal operation to occur (e.g., component failures, or cyber-attacks), requiring reconfiguration efforts~\cite{isermann2006fault}.
The reconfiguration of an industrial system is a knowledge-intensive, time-consuming, and error-prone task. It requires expertise in the industrial process, control system engineering, \IT networking, and understanding of CPS and systems security. A bigger challenge is when this reconfiguration should occur automatically without human intervention while ensuring minimum downtime, maximum productivity, and minimum financial losses.

%
\dkcol{
Distributed industrial automation deploys multiple interacting nodes, enabling the industrial Cyber-Physical Systems (\iCPS) paradigm. 
This paper proposes a framework for the autonomic reconfiguration of CPS.
The framework presents the industrial Cyber-Physical System Description Language (\iCPSDL), which provides a semantic foundation for an autonomic \iCPS architecture, i.e.,~a system designed for continuous self-regulation and self-adaptation~\cite{Kephart2003}.

Specifically, \iCPSDL defines instances of the architecture as ontology metaschemas, called {\em industrial domain definitions}, e.g., water distribution systems, HVAC, and power systems.
An industrial domain defines the physical and cyber-physical components of an industrial system, along with their state estimation capabilities. It also defines a knowledge base of \iCPS agents in terms of their interaction semantics.
A {\em program} in \iCPSDL represents an instance of an industrial domain. The reasoning capabilities of \iCPSDL enable autonomic \iCPS configurations by constructing the state estimation knowledge graph of a program and identifying state estimation redundancies.
A state estimation graph guides agent composition within an \iCPS network, computing controller inputs and dynamically configuring new control loops.

The capabilities of \iCPSDL are demonstrated through a use case from the Water Distribution Network (WDN) domain.
\iCPSDL is released under an Open Source licence\footnote{\url{https://github.com/KIOS-Research/iCPS-DL}}. A CodeOcean\footnote{\url{https://codeocean.com/capsule/1441773/tree/}} 
module includes a proof-of-concept autonomic supervisor controlling a simulation of the paper's examples. 

The rest of the paper is organised as follows:
Sections~\ref{sec:background} and~\ref{sec:related} present background, and related work, respectively. 
Section~\ref{sec:architecture} \dkcol{defines the \iCPSDL framework}.
Section~\ref{sec:sessions} defines the \iCPSDL semantics for communicating agents, whereas
Section~\ref{sec:ontology} defines an ontology meta-schema for defining industrial domains.
Section~\ref{sec:wdn_example} presents the \iCPSDL definition of the WDN domain and applies \iCPSDL reasoning over a corresponding example.
Finally, Section~\ref{sec:conclusion} discusses
future work and concludes.
Theorems and supplementary material are included in the Appendix.
}

\section{Background}
\label{sec:background}
\dkcol{
This section provides a brief background information on the foundations of our work. A detailed background regarding the semantic theory on interactions is provided in Appendix A. 

In this work, state estimation refers to methods for estimating unmeasured system values from available measurements, ranging from classic approaches like Luenberger Observers and Kalman Filters to domain-specific estimators, such as IHISE for hydraulics~\cite{IHISE} and BUBA for water quality~\cite{BUBA}.

Agent composability is based on {\em behavioural types}~\cite{book:behavioural-types}, a type-theoretic framework for agent interactions. Behavioural types structure user-defined interaction while ensuring critical properties such as deadlock-freedom and liveness~\cite{less_is_more}. Their type-theoretic foundation applies broadly to various message-passing programming paradigms.

Technologies such as \IIoT, Industry 4.0, and Analytics have become enablers of CPS implementations~\cite{singh2024cyber}.
Our previous work on the \textit{Semantically-enhanced IoT-enabled Intelligent Control Systems} (SEMIoTICS)~\cite{semiotics_reconfiguration,semiotics_HVAC} reasons over ontological \IIoT descriptions. It semantically describes \iCPS components enabling the composition of feedback control-loop schemes. 
Moreover, \cite{Nicolaou2018,Barrere2020} identify the need to generate alternative control loops by using redundancy in the case of a potential cyber-attack, as well as the creation of alternative configurations to enhance monitoring and control~\cite{Kouzapas2023}.
}


\section{Related work}
\label{sec:related}
Several approaches 
propose architectures for safe, sound, and seamless self-reconfiguration of \iCPS.
For instance, the IEC 61499 standard for distributed automation and control~\cite{iec-61499} defines function blocks as reactive, composable components to build interactive distributed systems. The work in~\cite{SOA-ASM} maps the Service-Oriented Architecture (SOA) to IEC 61499, integrating it with Autonomic Service Management (ASM)~\cite{ibm2005architectural} to propose self-manageable and adaptive \iCPS. The work in~\cite{LYU2024102627} introduces a self-manageable architecture for industrial automation systems by combining multi-agent systems with IEC 61499. The former framework employs a rule-based knowledge base for agent selection and composability, while the latter uses multi-agent modelling.
In contrast, \iCPSDL selects agents based on a state estimation model and composes them according to their interaction semantics.

An ontology is a formal representation of domain knowledge, structured to define concepts and their relationships. Examples include the Open Geospatial Consortium (OGC) Semantic Sensor Networks (SSN) ontology~\cite{haller:hal-02016313} and the European Telecommunications Standards Institute (ETSI) Smart Applications Reference Ontology (SAREF)~\cite{daniele2016interoperability}.
Modelling languages, such as the OGC Sensor Model Language (SensorML)~\cite{sensorml} 
%
and the Systems Modelling Language (SysML)~\cite{delligatti2013sysml} are widely used for \iCPS modelling, whereas the Architecture Analysis and Design Language (AADL)~\cite{AADL} provides semantics for validation and code generation. Modelica~\cite{MATTSSON1998501} models CPS with equations and supports embedded system implementation.
Bridging the two paradigms, knowledge graphs~\cite{knowledge_graph_survey} organise metadata as interconnected nodes, facilitating reasoning and linking semantics to system models.

The need for correct \iCPS configurations drives the development of formal methods. The authors in~\cite{STSL} integrate signal temporal logic with spatial logic to model and validate \iCPS properties. Lingua Franca is a language for deterministic actor interactions used for constructing of verifiable CPS~\cite{Lingua_Franca_CPS}. Event-B offers a set-theoretic framework to verify systems by modelling them as event-reactive state machines~\cite{eventB}. AgentSpeak, a language for modelling multi-agent systems based on Belief-Desire-Intention (BDI) systems~\cite{agentspeak}, is implemented through the Jason framework~\cite{jason}. Furthermore,~\cite{rewrite-cps} introduces a rewriting system to specify and analyse CPS components.
Petri nets are widely used for graphical modelling and analysis of concurrent systems, addressing properties such as state reachability, deadlock freedom, liveness, and fairness. For example,~\cite{Hippo-CPS} employs Petri nets to analyse the control aspects of CPS, while~\cite{petri-net-cps} models CPS as networks of communicating agents using Petri nets.
Contract composition has been applied in the control of dynamic systems, where contracts specify input assumptions and output guarantees, enabling modular and compositional system design~\cite{SHARF2024}.

In comparison, behavioural types are a family of frameworks that offer a comprehensive approach to ensuring the correctness of concurrent interactions while enabling seamless operational integration with \iCPS configuration technologies. Multiparty Session Types~\cite{multiparty_session_types} is a key behavioural types framework for verifying communication properties. Their type-theoretic nature allows integration with several programming languages~\cite{book:behavioural-types} including Java, Haskell, Python, Go, Scala, etc. Other applications include high-performance computing, multiagent systems, code generation, and system monitoring. The work in~\cite{less_is_more} shows that multiparty session-typed agents enjoy, by construction, properties such as liveness and fairness. This is the first work integrating behavioural types in \iCPS.

\section{Autonomic Industrial CPS Architecture}

\label{sec:architecture}


\begin{figure}
    \begin{center}
    \input{tikz/tikz_architecture.tex}
    \end{center}
    \caption{
    The architecture of the autonomic reconfiguration framework. Industrial process, \plant, has a state vector \pstate and inputs actuator signal vector \psignal. A distributed network connects multiple heterogeneous hardware components that perform control, monitoring, and optimisation tasks.
    Hardware components are depicted with white dashed rectangles. Controller, estimators, and sensor agents are depicted as rhombus, square, and circle shapes, respectively.
    \label{fig:architecture}}
\end{figure}


\subsection{Industrial processes and cyber-physical systems}

\dkcol{Fig.~\ref{fig:architecture} depicts the architecture for autonomic industrial processes.}
The industrial process, \plant, is an interconnected network of {\em physical} components, such as a WDN. A {\em state} vector, $\pstate = [\pstate[1], \dots, \pstate[n]]$, characterises the industrial process. Each state $\pstate[i]$, for $1 \leq i \leq n$, quantifies a {\em property} $\property \in \Properties$. States are {\em measured} at specific {\em sensing points} within the industrial process.
A subset of physical components, called {\em actuators}, input a {\em signal} vector, $\psignal = [\psignal[1], \dots, \psignal[m]]$, to {\em control} the state of the industrial process.
A state can be {\em estimated} using an {\em estimator function}, denoted by \efunction, which takes a vector of, measured or estimated, states as input. The collection of estimator functions constitutes the {\em estimation model}, \emodel.

The signal vector is generated by a network of cyber-physical components, depicted by dashed-lined white boxes in Fig.~\ref{fig:architecture}. These components 
include devices such as PLCs, edge computing devices, servers, etc.

Cyber-physical components, due to their computational and networking capabilities, {\em deploy} multiple software {\em agents}, represented in Fig.~\ref{fig:architecture} by circle, square, and rhombus shapes. These agents interact to produce the actuator signal vector \psignal.

Hardware components with sensing capabilities, such as sensor devices, are installed at sensing points to measure physical states. These devices deploy {\em sensor} agents, which measure, digitise, and communicate state information within the network.
Similarly, other CPS devices connect to actuators and deploy {\em controller agents}. Controller agents take an input state vector \estate, which can be either measured or estimated, and generate a signal vector \psignal, sent to the actuators via the underlying hardware.
Additionally, a set of {\em estimator agents} implements the estimation model.

A {\em control loop configuration} deploys distributed interacting agents across the CPS network. The agents produce and communicate a signal vector to the industrial process actuators. These interactions rely on {\em message-passing} communication, implemented using network protocols, such as \MQTT.

\subsection{An Autonomic Supervisor}

The architecture supports an {\em autonomic supervisor} with smart functionalities, such as control loop self-configuration.
The supervisor maintains a {\em knowledge base} with descriptions and rules enabling autonomic capabilities. An {\em industrial domain} is as an ontology schema allowing for semantic descriptions of industrial processes. The industrial domain includes properties, the estimator model, {\em physical component classes}, and an {\em agent repository}. It also links physical components and their interconnections with estimator functions and properties. This association defines a {\em translation function} that constructs a {\em state estimation graph}, which relates states as inputs and outputs of the estimator functions.
The state estimation graph guides the deployment of sensor and estimator agents to compute controller input, and thus configure the control loop.

The autonomic supervisor represents the industrial process as a knowledge graph. This graph captures the physical components, their interconnections, and the hardware components and their capabilities to measure states or control actuators.

A {\em semantic reasoning engine} module analyses an industrial process description to generate a control loop configuration. Given a controller agent with its input state and the actuator to control, the engine supports the following functionalities:
i) translates the industrial knowledge graph into a state estimation graph;
ii) traverses the state estimation graph to identify all {\em estimation trees} containing the information needed to estimate the controller's input state.
iii) identifies the estimator and sensor agents associated with the estimation tree; and
iv) utilises behavioural type theory to compose a deadlock-free and live control loop configuration description.



The {\em control loop configuration deployment} module inputs the semantic description of the control loop configuration produced by the semantic reasoning engine and deploys the corresponding agents. This process involves generating agent code, such as control and estimation logic, along with communication operations. Moreover, it allocates network resources by instructing hardware assets to deploy the generated agents. 

The framework includes an {\em event manager} module that monitors the operation of the industrial process and detects changes. Upon detecting an event, the event manager updates the supervisor knowledge base by performing removal or addition transactions.
If a control system disruption is detected, the event manager invokes the semantic reasoning engine to determine a new control loop configuration.



\subsection{The \iCPSDL: Enabling an Autonomic Supervisor}

The Industrial Cyber-Physical System Description Language (\iCPSDL) was developed using the ANTLR4 parser generator and the Go programming language. 
The language supports user-defined industrial domains, agent repositories, and industrial processes. It also supports the functionalities of the semantic reasoning engine, as shown by the grammar:
\begin{lstlisting}[language=antlr4]
command      : domain | repository | process
             | translate | traverse | configure | ...

domain       : 'domain' '{' domain_decl* '}'
domain_decl  : property | model | class | translation

repository   : 'repository' ID '{' rep_decl+ '}'
rep_decl     : 'estimate'   ID 'using' ID '=' local
             | 'sense'      ID 'using' ID '=' local
             | 'control'    ID 'using' ID '=' local
             | 'actuate'    ID 'using' ID '=' local

process      : PROCESS ID '{' process_decl* '}'
process_decl : device | component | connection_decl
\end{lstlisting}


%
The iCPS-DL GitHub repository 
and CodeOcean module implement a proof of concept autonomic supervisor, with basic implementations for the event manager and the control loop configuration deployment module, that uses the \iCPSDL to control a simulation of the paper's examples. Users can also use the terminal for demonstration purposes, manually defining structures and applying the \iCPSDL functionalities, or to load script files containing \iCPSDL commands.

Currently, the \iCPSDL does not support functionalities for the control loop configuration deployment module. Implementing such functionalities is an important future direction. It could include extending the semantics of \iCPSDL to implement a programming language based on interaction semantics and implementing network communication libraries and tools for remote programming of CPS components.
Moreover, a possible implementation of the event manager could incorporate fault detection algorithms, security incident detection, and manual and automatic component update detection.

\subsection{A Water Distribution Network Example}
\label{subsec:running_example}

Fig.~\ref{fig:motivating_example} presents a simple yet realistic case study of a smart drinking water distribution system serving as a running example throughout the paper.
The system consists of a pump actuator that increases system pressure and a water storage tank that meets the water demands of the area aggregated at the final junction point. These elements are typically geographically distant from each other.
The system includes three cyber-physical devices 
with sensing capabilities to measure and transmit network states, such as pressure and flow. 
The control objective is to maintain the tank water level within user-defined parameters. A water level sensor agent, denoted as \sensor[level], is deployed at the water tank sensing point to measure the water level and transmit it to a controller agent, labelled as \cntrl, deployed to control the pump actuator. The controller 
compares the water level with user-defined upper and lower thresholds, determining whether to send a stop or start signal to the pump actuator.

\begin{figure}
    {
\def\cx{10mm}\def\cy{10mm}\def\myscale{0.8}
\begin{tikzpicture}[x=\cx, y=\cy, scale=0.75, transform shape]
    \tikzstyle{connector}=[rectangle, rotate = 45, draw, fill, minimum height=0.05*\cx, minimum width=0.05*\cy, inner sep=1.5]
        
    \begin{scope}[shift={(0, 0)}, name prefix=cloud-]
        \node[iiot-cloud, minimum width=11*\cx, minimum height=3*\cy]
                                at  (5.5, 2.25) {};

        \node[plc, minimum height=1.75*\cx, minimum width=3*\cy]
                                at  (2, 2.25) (plc1) {};

        \node[plc, minimum height=1.6*\cx, minimum width=3.5*\cy]
                                at  
                                    (5.85, 2.5) (plc2) {};
    
        \node[plc, minimum height=0.75*\cx, minimum width=1.75*\cy]
                                at  (9.5, 2.5) (plc4) {};

        \draw[bus]                  (1, ) -- (10, 1);
        \draw[bus]                  (2, ) -- (plc1);
        \draw[bus]                  (5.85, ) -- (plc2);
        \draw[bus]                  (9.5, ) -- (plc4);

        \node[sensor, flow, inner sep=3]
                                at  (4.5, 2.9) (inflow) {};
        \node[right=0.2*\cx]
                                at  (inflow) {\sensor[inflow]};

        \node[star, star points=6, star point ratio=0.15, fill=black]
                                at   (1.85, 2.5) (shape) {}
                                ;
        \node[below]            at   (shape) {\mysf{init}};
    
        \node[agent, tmass, inner sep=3]
                                at  (3, 2.5) (mass) {}
                                    edge[dotted, thick, pre] (inflow)
                                    edge[dotted, thick, pre] (shape)
                                    ;
        \node[above left=0.1*\cx]
                                at  (mass) {\participant{est}};

        \node[connector]        at  (3, 2) (connector1)    {};
        \draw[dotted, thick]        (mass) -- (connector1);

        \node[controller, inner sep = 3]
                                at  (1, 2) (cntrl) {}
                                ;
        \node                   at  (1.25, 1.75) {\participant{controller}};

        \node[sensor, flow, inner sep=3]
                                at  (6.5, 2) (level) {}
                                    edge[post] (cntrl)
                                    ;

        \node[right=0.2*\cx]
                                at  (level) {\sensor[level]};

        \node[sensor, flow, inner sep=3]
                                at  (9, 2.5) (demand) {}
                                    edge[dotted, thick, post] (mass);

        \node[right=0.2*\cx]
                                at  (demand) {\sensor[demand]};
    \end{scope}

    \begin{scope}[shift={(0, 3.75)}, name prefix=wdn-process-]
        \node[inner sep=0]      at  (0, 1) (j1) {};

        \drawPump{1}{1}{0.3}{pump}{{\pump}}

        \node[cylinder, shape border rotate=90, draw, minimum height=1.2*\cx, minimum width=1*\cy]
                                at  (6.5, 0.75) (tank) {};
        \node                   at  (tank) {\tank};

        \draw[double, ->]           (j1) -- (6, 1);

        \node[jun]              at  (9, 0.5) (j3) {};
        \node[above]            at  (j3) {\mysf{junction}}                        ;

        \draw[double, ->]           (7, 0.5) -- (j3);

        \node[inner sep=0]          at  (6.5, 0.25) (sensor1) {};
        \node[inner sep=0]          at  (4.5, 1)   (sensor3) {};
        \node[inner sep=0]          at  (9, 0.5)  (sensor4){};
    \end{scope}

    \draw[thick, ->]                       (cloud-cntrl) -- (wdn-process-pump);
    \draw[thick, ->]                       (wdn-process-sensor1) -- (cloud-level);
    \draw[thick, ->]                       (wdn-process-sensor3) -- (cloud-inflow);
    \draw[thick, ->]                       (wdn-process-sensor4) -- (cloud-demand);
\end{tikzpicture}
}
    \caption{The drinking water distribution network consists of a pump, a tank, and a junction where water is consumed. Three sensors monitor the tank inflow, tank water level, and actual water demand, along with a controller for the pump actuator. Each sensor agent communicates and exchanges information with other CPS components.
    \label{fig:motivating_example}}
\end{figure}

If a fault occurs in the level sensor 
the pump actuator may malfunction, potentially resulting in an overflow or an empty tank. However, an autonomic supervisor has sufficient information to estimate the tank's water level. For instance, it can deploy an estimator agent that uses the inflow data from sensor agent \sensor[inflow] and outflow data from sensor agent \sensor[demand], along with an initial condition, \mysf{init} (the last known tank level), to estimate the current water level.
A semantic description of the industrial process and its state estimation model guides the autonomic supervisor in inferring the water level estimator and its associated input and output states. Additionally, the communication semantics of each agent enable the autonomic supervisor to configure a live, deployable control loop.


\section{Agent-based Semantic Framework}
\label{sec:sessions}

The semantic framework for agent interactions is based on \textit{behavioural types}~\cite{book:behavioural-types}.
For an introduction to {\em multiparty session types} refer to the tutorial paper~\cite{gentle_multiparty}. \AppendixSessions 
formally introduces the behavioural type theory for \iCPSDL.

\begin{figure*}
\begin{lstlisting}[language=antlr4]
  action : ID '!' ID | ID '?' ID                                    # send receive actions
  local  : 'end'                                                    # inactive
         | action '.' local                                         # sequence
         | action '{' ID ':' local '}' ('or' '{' ID ':' local '}')+ # choice
         | ID '.' local                                             # recursion
         | ID                                                       # label
  local_configuration : 'local' '{' (ID '=' local)+ '}'             # local configuration

  pass   : ID '->' ID ':' ID                                        # message pass  
  global : 'end'                                                    # inactive
         | pass '.' global                                          # message pass
         | pass '{' ID ':' global '}' ('or' '{' ID ':' global '}')+ # choice
         | ID '.' global                                            # recursion
         | ID                                                       # label
  global_configuration : 'global' global                            # global protocol
\end{lstlisting}

\begin{lstlisting}[language=iCPSDL]
  lconfig := local {
    s1          = loop. t.tank_mass!flow. loop
    s2          = loop. t.tank_mass!flow. loop
    t.tank_mass = loop. s1?flow. s2?flow. controller!head. loop
    controller  = loop. t.tank_mass?head. u!signal { ON: loop } or { OFF: loop }
    u           = loop. controller?signal { ON: loop } or { OFF: loop }
  }

  gconfig := global loop. s1->t.tank_mass:flow. s2->t.tank_mass:flow. t.tank_mass->controller: head.
                          controller->u:signal { OFF: loop } or { ON: loop }
\end{lstlisting}
\caption{The \iCPSDL grammar for defining local protocols, local configurations, global protocols, global configurations (above). An \iCPSDL snippet semantically describes a local and a global configuration for a control loop involving a tank estimator (below). \label{fig:session_syntax}}
\end{figure*}

Fig.~\ref{fig:session_syntax} (above) defines the \iCPSDL grammar for behavioural types.
Behavioural types use a textual notation, called {\em local protocol}, to define agent interactions. The syntax for {\em send} and {\em receive} {\em actions} form the core of a local protocol.
A send action has the form \iCPSDLinl{p!type}, representing a communication operation that sends a value of type \iCPSDLinl{type} to agent \iCPSDLinl{p}. Dually, a receive action \iCPSDLinl{p?type} represents a communication operation that receives a value of type \iCPSDLinl{type} from agent \iCPSDLinl{p}.

Local protocol \iCPSDLinl{end} represents the {\em inactive} local protocol, meaning it has no active behaviour.
A {\em sequence} action composition 
indicates an agent performing the communication described by \antlrinl{action}, followed by the behaviour described by local protocol \antlrinl{local}.
A {\em choice} protocol 
defines a choice between multiple local protocols. This choice is made using enumeration labels.
For example, an agent with protocol \iCPSDLinl{p!signal \{ ON: local1 \} or \{ OFF: local2 \}} sends either label \iCPSDLinl{ON} or \iCPSDLinl{OFF} of enumeration \iCPSDLinl{signal} to participant \iCPSDLinl{p}, and then proceeds with either \iCPSDLinl{local1} or \iCPSDLinl{local2}, depending on the label sent.
Conversely, an agent with protocol \iCPSDLinl{p?signal \{ ON: local1 \} or \{ OFF: local2 \}} receives either label \iCPSDLinl{ON} or \iCPSDLinl{OFF} of enumeration \iCPSDLinl{signal} from participant \iCPSDLinl{p}, and then proceeds with either \iCPSDLinl{local1} or \iCPSDLinl{local2}, depending on the label received.
The {\em recursion} local protocol 
defines a loop, where the label \antlrinl{'ID'} serves as the execution jump within the loop.
For example, protocol \iCPSDLinl{loop. p?int. loop} represents a loop where an agent repeatedly receives an integer message, \iCPSDLinl{int}, from participant \iCPSDLinl{p}.

A semantic description for an agent associates the agent's name with a local protocol. For example, the following \iCPSDL code specifies the behaviour of a tank head estimator:
\begin{lstlisting}[language=iCPSDL, basicstyle=\footnotesize\ttfamily]
est = loop. s1?flow. s2?flow. controller!head. loop
\end{lstlisting}
Here, the estimator agent \iCPSDLinl{est} operates in a loop. It first receives a \iCPSDLinl{flow} value from sensor \iCPSDLinl{s1}, followed by another \iCPSDLinl{flow} value from sensor \iCPSDLinl{s2}. It then estimates and sends a \iCPSDLinl{head} value to the agent \iCPSDLinl{controller}.

A set of agents forms a {\em local configuration}, defining the interaction of multiple agents. Fig.~\ref{fig:session_syntax} illustrates local configuration, \iCPSDLinl{lconfig}, describing the running example in Fig.~\ref{fig:motivating_example}, focusing on the case where the control loop deploys a tank level estimator.
%
%
Sensor agents \iCPSDLinl{s1} and \iCPSDLinl{s2} provide inflow and outflow values, respectively, to the \iCPSDLinl{est} estimator agent. The estimator \iCPSDLinl{est} computes and sends the \iCPSDLinl{head} value to the \iCPSDLinl{controller}. The \iCPSDLinl{controller} then transmits a binary signal, \ON or \OFF, to regulate the pump agent \iCPSDLinl{u}.

From a communication perspective, agents within a local configuration synchronise through their dual (send/receive) actions. Communicating agents must adhere to desired properties, such as deadlock freedom and liveness.
Deadlock freedom ensures that a local configuration is either in a state where two agents can synchronise or all agents are inactive.
Liveness, a cornerstone property for distributed systems, guarantees that every non-inactive agent will eventually perform an action, ensuring progress for all allocated components in an interaction.
For a formal definition of communication transition semantics and local configuration properties, see \AppendixSessions. 


A {\em global protocol} provides an alternative perspective on behavioural types, which ensures properties such as deadlock freedom and liveness. Specifically, live local configurations may {\em compose} a global protocol in polynomial time with respect to the local configuration syntactic size
(see \TheoremPolynomial).

Fig.~\ref{fig:session_syntax} provides the \iCPSDL syntax for global protocols.
The {\em message-passing} action, \iCPSDLinl{p->q:type}, is the compositional block for global protocols. The action describes 
agent \iCPSDLinl{p} sending type \iCPSDLinl{type} to agent \iCPSDLinl{q}.
Global protocol \iCPSDLinl{end} has no active behaviour.
A message-passing protocol composes in sequence a message-passing action.
A choice protocol declares a choice between multiple protocols. Concretely, protocol \iCPSDLinl{p->q:signal \{ON: global1\} or \{OFF: global2\}}, describes participant \iCPSDLinl{p} sending label \iCPSDLinl{ON} or \iCPSDLinl{OFF} to participant \iCPSDLinl{q}. Both participants proceed according to the chosen label.
The recursive global protocol defines a loop, where the label \antlrinl{'ID'} serves as the execution jump within the loop.

Global protocol \iCPSDLinl{gconfig} in Fig.~\ref{fig:session_syntax} defines the agent interactions for the running example in Fig.~\ref{fig:motivating_example} which is semantically equivalent to local configuration \iCPSDLinl{lconfig}.

A global protocol describes the behaviour of a local configuration, by {\em projecting} its constituent roles. A projection algorithm results in a live local configuration, and symmetrically, whenever a local configuration is live it may {\em compose} a global protocol 
(cf.~\DefinitionComposition~and~\TheoremLive).

The \iCPSDL provides algorithms for projection and composition. 
Command \iCPSDLinl{project} projects a global protocol, e.g.,
\begin{lstlisting}[language=iCPSDL, basicstyle=\footnotesize\ttfamily]
    project gconfig
\end{lstlisting}
produces local configuration \iCPSDLinl{lconfig}. 
Conversely, command \iCPSDLinl{compose} composes a local configuration, e.g.,
\begin{lstlisting}[language=iCPSDL, basicstyle=\footnotesize\ttfamily]
    compose lconfig
\end{lstlisting}
 will produce global protocol \iCPSDLinl{gconfig}.


\section{A Ontology Meta-schema for Autonomic Industrial Cyber-Physical Systems}
\label{sec:ontology}

This section defines a meta-schema for defining ontology schemas for industrial process domains.
This meta-schema enables users to create domain-specific ontologies for defining knowledge graph representations of industrial processes within these domains.
Formally, an industrial domain 
is defined as:
\[
    \inDomain = \lrangle{\Properties, \emodel, \Classes, \translation, \lrangle{\AG, \modelmap}}
\]
where \Properties is the set of properties, \emodel the set of estimator functions, \Classes is a set of industrial component classes, and \translation is a state estimation translation function. Structure \lrangle{\AG, \modelmap} is an agent repository. Set \AG is the set of agents semantics, and agent mapping \modelmap maps agents to the elements of the industrial domain. An industrial domain provides the information to define industrial processes, \process. These concepts are introduced in detail below.

An industrial component class is defined as:
\[
    \lrangle{\class, \properties, \estimators} \in \Classes,
\]
where \class is the class name, and $\properties \subseteq \Properties$ and $\estimators \subseteq \emodel$ are its attributes. 
Set \properties characterises the properties of each state of the industrial component, while \estimators represents the functions estimating the industrial component states. Classes are partitioned into physical components classes and sensing points classes. Sensing point classes have a single attribute specifying the property measured at that point. Moreover, actuator classes, \ActuatorClasses, are a subset of physical component classes, $\ActuatorClasses \subseteq \Classes$. 

The set of agents, \AG, is partitioned into
estimator agents, \AG[g]; sensor agents, \AG[s]; controller agents, \AG[c]; and actuator agents \AG[a].
The description of agents uses behavioural semantics. 
%
%
%
%
Moreover, agent mapping
\[
    \modelmap: (\AG[g] \to \emodel) \cup (\AG[s] \to \Properties) \cup (\AG[a] \to \ActuatorClasses) \cup (\AG[c] \to \ActuatorClasses),
\]
maps estimator, sensor, and actuator agents to the corresponding estimator functions, properties, and actuator classes, respectively. Also, it maps controller agents to actuator classes, indicating the actuators controlled by each controller.

An industrial process knowledge graph is defined as:
\[
    \process = \lrangle{\GG, \lrangle{\Hardware, \h}}.
\]
Graph $\GG = \lrangle{\VV, \EE}$, called industrial process knowledge graph, consists of a set of vertices $v_1, v_2, \dots \in \VV$ and a set of edges $\EE \subseteq \VV \times \VV$. Set $\VV$ is partitioned into:
i) physical components, $\VV[c]$; and
ii) sensing points, $\Sensors$, where $s, s_1, \dots \in \Sensors$.
Additionally, the set of actuators, $\Actuators$, forms a subset of the physical components, i.e., $\Actuators \subseteq \VV[c]$.
Each physical component is associated with a physical component class, and each sensing point with a sensing point class.

Set \Hardware, with $h_1, h_2 \dots \in \Hardware$, is the set of CPS components. Moreover, relation $\h \subseteq  (\Sensors \cup \Actuators) \times \Hardware$, relates sensing points and actuators with hardware components. For example, it associates a sensing point with a sensor device.

The state estimation translation function, $\translation: \set{\GG: \forall \GG} \to \set{\egraph: \forall \egraph}$ translates an industrial process graph into a state estimation graph, $\egraph = \lrangle{\eV, \eE}$. This graph captures information on measuring or estimating states following the state estimation model.
Set \eV, with $v_1, v_2, \dots \in \eV$, is partitioned into:
the set of state nodes, $\eV[s] = \set{v.\property \setbar v \in \VV, v \text{ instance of } \lrangle{\class, \properties, \estimators}, \property \in \properties}$; the set of estimator nodes, denoted as $\eV[g] = \set{v.\efunction \setbar v \in \VV, v \text{ instance of } \lrangle{\class, \properties, \estimators}, \efunction \in \estimators}$; and the set of sensing points, \Sensors. Set $\eE \subseteq \eV \times \eV$ is the set of edges.

In particular, set \eV[g] corresponds to the estimator attributes of the industrial process graph and set \eV[s], derives from the property attributes, constructing the states of the industrial process graph.
Moreover, set \eE is constructed by:
i) translating each physical component, $v \in \VV$, into a state estimation subgraph;
ii) using each edge $(v_1, v_2) \in \EE$ to interlink these subgraphs, forming a comprehensive state estimation graph.
The state estimation graph links states with estimators establishing state estimation relationships, and links states with sensing points establishing state measurement relationships.
%

\newcolumntype{C}[1]{>{\centering\arraybackslash}m{#1}} 

\begin{figure*}
\begin{tabular}{C{7.4cm}C{9.8cm}}
    \includegraphics[scale=0.45]{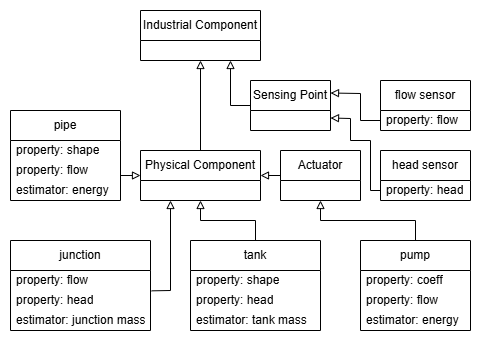}
    &
    \begin{lstlisting}[language=iCPSDL]
agents := repository wdn {
  estimate tank_mass using tmass =
    loop. producer1? flow. producer2? flow. consumer1!head. loop

  estimate junction_mass using jmass =
    loop. producer1? flow. consumer1! flow. loop

  estimate link_energy using lenergy =
    loop. producer1? head. producer2? head. consumer1!flow. loop

  sense head using headSensor = loop. consumer1! head. loop
  sense flow using flowSensor = loop. consumer1! flow. loop

  control pump using controller = loop.
    producer1? head. consumer1!signal {ON: loop} or {OFF: loop}

  actuate pump using pumpActuator =
    loop. producer1? signal {ON: loop} or {OFF: loop}
}
    \end{lstlisting}
    \end{tabular}
    \caption{Class diagram of the Water Distribution Network domain together with \iCPSDL description of the agent repository. \label{fig:WDN_domain}}
\end{figure*}

\section{An Ontology Schema for the WDN domain} 
\label{sec:wdn_example}

This section introduces an \iCPSDL ontology schema for the WDN domain.
Moreover, it introduces the semantics for describing WDN industrial processes through the running example 
For a formal description refer to \AppendixWDN. 

\subsection{The WDN industrial domain}

The following code defines a part of the WDN domain:

\begin{lstlisting}[language=iCPSDL]
wdn := domain {
  property flow, head, tank_shape, link_shape,
           signal {ON, OFF}

  model tank_mass, junction_mass, demand_mass, link_energy

  physical junction(head, flow, junction_mass):
    flow -> junction_mass, junction_mass -> flow

  physical demand(head, flow, demand_mass):
    flow -> demand_mass, demand_mass -> flow

  physical pipe(link_shape, flow, link_energy):
    link_shape -> link_energy, link_energy -> flow

  physical tank (tank_shape, head, tank_mass):
    tank_shape -> tank_mass, tank_mass -> head

  actuator pump(link_shape, flow, link_energy):
    link_shape -> link_energy, link_energy -> flow

  translation pipe -> junction :
    pipe.flow -> junction.junction_mass,
    junction.junction_mass -> pipe.flow,
    junction.head->pipe.link_energy

  translation pipe -> tank :
    pipe.flow -> tank.tank_mass,
    tank.head -> pipe.link_energy

  translation pump -> junction :
    pump.flow -> junction.junction_mass,
    junction.junction_mass -> pump.flow,
    junction.head->pump.link_energy
  ...
}
\end{lstlisting}

Properties include \iCPSDLinl{flow}, \iCPSDLinl{head}, \iCPSDLinl{tank\_shape}, and \iCPSDLinl{link\_shape}, as well as an enumeration, \iCPSDLinl{signal} with \iCPSDLinl{ON} and \iCPSDLinl{OFF} labels, for controlling the pump actuator. The estimator model includes the \iCPSDLinl{tank\_mass}, \iCPSDLinl{junction\_mass}, \iCPSDLinl{demand\_mass}, and \iCPSDLinline{link\_energy}.
There are four physical component classes: \iCPSDLinl{junction}, \iCPSDLinl{demand}, \iCPSDLinl{pipe}, and \iCPSDLinl{tank} including their property and estimator attributes. 
It also defines actuator class \iCPSDLinl{pump}. Each class defines a part of the translation function, establishing state and estimator relations. For example, class \iCPSDLinl{pipe}, defines \iCPSDLinl{link\_shape} state as an input to the estimator \iCPSDLinl{link\_energy} and state \iCPSDLinl{flow} as an output.
The rest of the domain definitions are rules defining the translation function. Each rule takes a connection between two classes and defines corresponding connections between states and estimators, creating the state estimation graph of an industrial process. 
Fig.~\ref{fig:WDN_domain} defines a class diagram and an agent repository for the \iCPSDLinl{wdn} domain.
It defines and maps agents for the \iCPSDLinl{tank\_mass}, \iCPSDLinl{junction\_mass}, and \iCPSDLinl{link\_energy} model estimators, as well as agents for the \iCPSDLinline{head} and \iCPSDLinline{flow} sensors. A controller and a pump actuator agent are also defined. \AppendixWDNDomain 
gives the complete WDN domain definition.

\subsection{Autonomic supervision for the running example}

This section demonstrates the autonomic reconfiguration the running example in response to events.
Additionally, the CodeOcean repository provides a proof-of-concept autonomic supervisor that reconfigures a simulation of the running example, ensuring control despite consecutive failures.

\begin{figure}
    \begin{tabular}{c}
        \begin{tikzpicture}[x=10mm, y=10mm]
    \begin{scope}[scale=0.6]
        \node[junction]     at  (0, 2.5) (r) {};
        \node[above=0.5mm]  at  (r) {\iCPSDLinl{r}};
    
        \node[pump]         at  (2, 2.5) (pmp) {}
                                edge[pre] (r);
        \node[above=0.5mm]  at  (pmp) {\iCPSDLinl{u}};
    
        \node[junction]     at  (4, 2.5) (j) {}
                                edge[pre] (pmp);
        \node[above=0.5mm]  at  (j) {\iCPSDLinl{j}};
    
        \node[pipe]         at  (6, 2.5) (p1) {}
                                edge[pre] (j);
        \node[above=0.5mm]  at  (p1) {\iCPSDLinl{p1}};
    
        \node[tank]         at  (8, 2.5) (tank) {}
                                edge[pre] (p1);
        \node[above=0.5mm]  at  (tank) {\iCPSDLinl{t}};
    
        \node[pipe]         at  (10, 2.5) (p2) {}
                                edge[pre] (tank);
        \node[above=0.5mm]  at  (p2) {\iCPSDLinl{p2}};
    
        \node[junction]     at  (12, 2.5) (d) {}
                                edge[pre] (p2);
        \node[above=0.5mm]  at  (d) {\iCPSDLinl{d}};
                                    
    
        \node[spoint, head]
                                at  (0, 1.5) (s1) {}
                                    edge[pre] (r);
        \node[right=0.5mm]      at  (s1) {\iCPSDLinl{s1}};
    
        \node[spoint, flow]
                                at  (2, 1.5) (s2) {}
                                    edge[pre] (pmp);
        \node[right=0.5mm]      at  (s2) {\iCPSDLinl{s2}};
    
        \node[spoint, head]
                                at  (4, 1.5) (s3) {}
                                    edge[pre] (j);
        \node[left=0.5mm]       at  (s3) {\iCPSDLinl{s3}};
    
        \node[spoint, flow]
                                at  (4, 0.5) (s4) {}
                                    edge[pre, bend right] (j);
        \node[right=0.5mm]      at  (s4) {\iCPSDLinl{s4}};
        
        \node[spoint, flow]
                                at  (6, 1.5) (s5) {}
                                    edge[pre] (p1);
        \node[right=0.5mm]      at  (s5) {\iCPSDLinl{s5}};
    
        \node[spoint, head]
                                at  (8, 1.5) (s6) {}
                                    edge[pre] (tank);
        \node[right=0.5mm]      at  (s6) {\iCPSDLinl{s6}};
    
        \node[spoint, flow]
                                at  (10, 1.5) (s7) {}
                                    edge[pre] (p2);
        \node[right=0.5mm]      at  (s7) {\iCPSDLinl{s7}};
    
        \node[spoint, flow]
                                at  (12, 1.5) (s8) {}
                                    edge[pre] (d);
        \node[right=0.5mm]      at  (s8) {\iCPSDLinl{s8}};
    
        \node[plc,minimum width=0.8cm, minimum height=0.8cm]  at  (2, 0) (plc1) {}
                                    edge[pre]   (s1)
                                    edge[pre]   (s2)
                                    edge[pre]   (s3)
                                    edge[pre]   (s4)
                                    edge[pre, bend left]  (pmp);
        \node                 at    (plc1) {\iCPSDLinl{dev1}};
        
        \node[plc,minimum width=0.8cm, minimum height=0.8cm]  at  (8, 0) (plc2) {}
                                    edge[pre] (s5)
                                    edge[pre] (s6)
                                    edge[pre] (s7);
        \node                 at    (plc2) {\iCPSDLinl{dev2}};
    
        \node[plc,minimum width=0.8cm, minimum height=0.8cm]  at  (12, 0) (plc3) {}
                                    edge[pre] (s8);
        \node                 at    (plc3) {\iCPSDLinl{dev3}};

        \draw[dotted, <->]          (plc1) edge (plc2);
        \draw[dotted, <->]          (plc2) edge (plc3);
    \end{scope}
\end{tikzpicture}
        \\
        \begin{tikzpicture}[x=10mm, y=10mm]
    \begin{scope}[scale=0.6]
        \junctionGraph{-0.5}{0}{j1}{}
        \pumpGraph{1.5}{0}{pump}{}
    
        \draw           (agpump) edge[pre] (prj1)
                        (flj1) edge[<->] (flpump);
    
        \junctionGraph{3.5}{0}{j2}{}
    
        \draw           (agpump) edge[pre] (prj2)
                        (flpump) edge[<->] (flj2);
    
        \pipeGraph{5.5}{0}{p1}{} 
    
        \draw           (flj2) edge[<->] (flp1)
                        (prj2) edge[post] (agp1);
    
        \tankGraph{7.5}{0}{tank}{} 
    
        \draw           (agtank) edge[<->] (flp1)
                        (prtank) edge[post] (agp1);
        
        \pipeGraph{9.5}{0}{p2}{} 
        \draw           (agtank) edge[<->] (flp2)
                        (prtank) edge[post] (agp2);
        
        \junctionGraph{11.5}{0}{j3}{} 
        \draw           (flp2) edge[<->] (flj3)
                        (agp2) edge[pre] (prj3);
                        
    
        \node[spoint, head]
                        at  (0, 3) (s1) {}
                            edge[post, bend right] (prj1);
        \node[right]    at  (s1) {\iCPSDLinl{s1}};
    
        \node[spoint, flow]
                        at  (2, 3) (s2) {}
                            edge[post] (flpump);
        \node[right]    at  (s2) {\iCPSDLinl{s2}};
    
        \node[spoint, head]
                        at  (4, 3) (s3) {}
                            edge[post, bend right] (prj2);
        \node[left]     at  (s3) {\iCPSDLinl{s3}};
    
        \node[spoint, flow]
                        at  (4.75, 2.5) (s4) {}
                            edge[post, bend left] (demj2);
        \node[right]    at  (s4) {\iCPSDLinl{s4}};
        
        \node[spoint, flow]
                        at  (6, 3) (s5) {}
                            edge[post] (flp1);
        \node[right]    at  (s5) {\iCPSDLinl{s5}};
    
        \node[spoint, head]
                        at  (8, 3) (s6) {}
                            edge[post, bend left] (prtank);
        \node[right]    at  (s6) {\iCPSDLinl{s6}};
    
        \node[spoint, flow]
                        at  (10, 3) (s7) {}
                            edge[post] (flp2);
        \node[left]     at  (s7) {\iCPSDLinl{s7}};
    
        \node[spoint, flow]
                        at  (12, 3) (s8) {}
                            edge[post, bend left] (demj3);
        \node[right]    at  (s8) {\iCPSDLinl{s8}};
    \end{scope}
\end{tikzpicture}
        \\
        \begin{tikzpicture}[x=10mm, y=10mm]
    \begin{scope}[scale=0.6]
            \node[junction]     at  (0, 8.25) (r) {}
                                ;
            \node[above=0.5mm]  at  (r) {\iCPSDLinl{r}};

            \node[pump]         at  (2, 8.25) (pump) {}
                                ;
            \node[above=0.5mm]  at  (pump) {\iCPSDLinl{u}}
                                ;                
            \draw               (r) edge[post] (pump);
        
            \node[junction]     at  (4, 8.25) (j) {}
                                    edge[pre] (pump);
            \node[above=0.5mm]  at  (j) {\iCPSDLinl{j}};
        
        
            \node[pipe]         at  (6, 8.25) (p1) {}
                                    edge[pre] (j);
            \node[above=0.5mm]  at  (p1) {\iCPSDLinl{p1}};
        
            \node[tank]
                                at  (8, 8.25) (tank) {}
                                    edge[pre] (p1)
                                ;            
            \node[above=0.5mm]  at  (8, 8.25) {\iCPSDLinl{t}};

            \node[pipe]         at  (10, 8.25) (p2) {}
                                    edge[pre] (tank)
                                ;
            \node[above=0.5mm]  at  (p2) {\iCPSDLinl{p2}};
        

            \node[junction]     at  (12, 8.25) (d) {}
                                    edge[pre] (p2)
                               ;

            \node[above=0.5mm]  at  (d) {\iCPSDLinl{d}};

        \node[spoint, head]
                            at      (0, 7.25) (sp1) {}
                                    edge[pre, thick] (r);
        \node[right]        at      (sp1) {\iCPSDLinl{s1}};
        
        \node[spoint, flow]
                            at      (2.5, 7.25) (sp2) {}
                                    edge[pre, thick] (pump);
        \node[left]         at      (sp2) {\iCPSDLinl{s2}};
        
        \node[spoint, head]
                            at      (3.5, 7.25) (sp3) {}
                                    edge[pre, thick] (j);
        \node[right]        at      (sp3) {\iCPSDLinl{s3}};
        
        \node[spoint, flow]
                           at      (4.5, 7.25) (sp4) {}
                                    edge[pre, thick] (j);
        \node[right]        at      (sp4) {\iCPSDLinl{s4}};
        
        \node[spoint, flow]
                            at      (6, 7.25) (sp5) {}
                                    edge[pre, thick] (p1);
        \node[right]        at      (sp5) {\iCPSDLinl{s5}};
        
        \node[spoint, head] at      (8, 7.25) (sp6) {}
                                    edge[pre, thick] (tank);
        \node[right]        at      (sp6) {\iCPSDLinl{s6}};
        
        \node[spoint, flow]
                            at      (10, 7.25) (sp7) {}
                                    edge[pre, thick] (p2);
        \node[right]        at      (sp7) {\iCPSDLinl{s7}};
        
        \node[spoint, flow]
                            at      (12, 7.25) (sp8) {}
                                    edge[pre, thick] (d);
        \node[right]        at      (sp8) {\iCPSDLinl{s8}};

        \draw[->, thick]            (1.25, 7.25) -- (pump);

        \node[iiot-cloud, minimum width=140mm, minimum height=70mm, scale=0.6]   at  (6.25, 3) {};
        
        \draw[plc]              (-0.5, 0) rectangle (5, 6);
        \node[below]        at  (2.5, 0) {\iCPSDLinl{dev1}};
        
        \draw[plc]              (5.5, 0) rectangle (10.5, 6);
        \node[below]        at  (8, 0) {\iCPSDLinl{dev2}};
        
        \draw[plc]              (11, 3) rectangle (13, 6);
        \node[below]        at  (12, 3) {\iCPSDLinl{dev3}};

        \node[sensor, head]
                            at      (0, 5.5) (s1) {}
                                    edge[pre, thick] (sp1);
        
        \node[sensor, flow]
                            at      (2.5, 5.5) (s2) {}
                                    edge[pre, thick] (sp2);
        
        \node[sensor, head]
                            at      (3.5, 5.5) (s3) {}
                                    edge[pre, thick] (sp3);
        
        \node[sensor, flow]
                           at      (4.5, 5.5) (s4) {}
                                    edge[pre, thick] (sp4);
        
        \node[sensor, flow]
                            at      (6, 5.5) (s5) {}
                                    edge[pre, thick] (sp5);
        
        \node[sensor, head] at      (8, 5.5) (s6) {}
                                    edge[pre, thick] (sp6);
        
        \node[sensor, flow]
                            at      (10, 5.5) (s7) {}
                                    edge[pre, thick] (sp7);
        
        \node[sensor, flow]
                            at      (12, 5.5) (s8) {}
                                    edge[pre, thick] (sp8);
        
        
        
        \node[head, pstate] at  (0, 3.5) (pr1) {}
                                edge[pre, darkyellow, thick] (s1);
        \node[left]         at  (0, 4.75) {$y_{1}$};
        
        \node[flow, pstate]
                            at  (2.5, 1.5) (pumpFlow) {}
                                edge[pre, darkgreen, thick] (s2);
        \node[left]         at  (2.5, 4.75) {$y_{2}$};
        
        \node[head, pstate] at  (3.5, 3.5) (pr2) {}
                                edge[pre, darkyellow, thick] (s3);
        \node[left]         at  (3.5, 4.75) {$y_{3}$};
        
        \node[flow, pstate]
                            at  (4.5, 3.5) (pumpDemand) {}
                                edge[pre, darkgreen, thick] (s4);
        \node[left]         at  (4.5, 4.75) {$y_{4}$};

        \node[jmass, estimator]
                            at  (4.5, 1.5) (pumpjMass){}
                                edge[pre, darkgreen, thick] (pumpDemand)
                                edge[pre, thick] node[above, black] {$\hat{y}_2$} (pumpFlow)
                            ; 
        
        \node[energy, estimator]
                            at  (2, 3.5) (pumpEnergy) {}
                                edge[pre, darkyellow, thick] node[above, black, pos=0.65] {$y_1$} (pr1)
                                edge[post, darkyellow, thick] node[left, black] {$\hat{y_2}$} (pumpFlow)
                                edge[pre, darkyellow, thick] node[above, black, pos=0.75] {$y_3$} (pr2)
                            ;
        
        \node[controller]   at  (1.25, 0.5) (controller) {};
        \draw[thick]            (controller) -- (1.25, 7.25); 
        
        \node[left]         at  (1.25, 4.75) {$v$};
        
        
        \node[flow, pstate]
                            at  (6, 1.5) (tankIflow) {}
                                edge[pre, red, thick] (s5);
        \node[left]         at  (6, 4.75) {$y_{5}$};
        
        \node[head, pstate]
                            at  (8, 0.5) (tankHead) {}
                                edge[pre, blue, thick] (s6);
        \node[left]         at  (8, 4.75) {$y_{6}$};
        
        \node[flow, pstate]
                            at  (10, 3.5) (tankOflow) {}
                                edge[pre, red, thick] (s7);
        \node[left]         at  (10, 4.75) {$y_{7}$};
        
        \node[tmass, estimator]
                            at  (10, 1.5) (tankMass) {}
                                edge[pre, red, thick] node[black, left] {$y_{7}$} (tankOflow)
                                edge[pre, red, thick] node[above, black, pos=0.75] {$\hat{y}_{5}$} (tankIflow);
        
        \path[draw, red, thick, ->]  (tankMass)  -- node[black, left] {$\hat{y}_{6}$}
                                (10, 0.5) -- (tankHead);
        
        \node[flow, pstate]
                            at  (12, 3.5) (demand) {}
                                edge[pre, darkgreen, thick] (s8);
        \node[left]         at  (12, 4.75) {$y_{8}$};
        
        
        \draw                   (demand) edge[post, darkgreen, thick] node[above] {$y_{8}$} (tankOflow);
        \draw                   (pumpjMass) edge[post, thick] node[above, black] {$\hat{y}_5$} (tankIflow);
        
        \draw                   (tankHead) edge[post, thick] node[above, pos=0.4] {$\hat{y}_{6}$} (controller);
        
    \end{scope}
\end{tikzpicture}
    \end{tabular}
    \caption{
        Top: Graphical representation 
        of the running example in Section~\ref{subsec:running_example}.
        Middle: Graphical representation of the
        state estimation graph.
        Bottom: Graphical representation of
        the seven estimation trees 
        rooted at $\tank.\head$,
        with a hardware device assignment for each node. 
 	\label{fig:WDN_IP}
    }
\end{figure}

Fig.~\ref{fig:WDN_IP} (top)  depicts the knowledge graph of the running example in Section~\ref{subsec:running_example} extended with additional sensing points. 
The corresponding \iCPSDL description is defined as:
\begin{lstlisting}[language=iCPSDL, caption={\iCPSDL description of the running example. Shaded elements are affected in case of device \iCPSDLinl{dev2} failure.}, label={lst:iCPSDL_process}]
simple := process wdn {
  device   dev1,(*\grayshade{dev2,}*) dev3

  physical r, d        demand
  physical j           junction
  physical p1, p2      pipe
  physical t           tank
  actuator u@dev1      pump

  sensor  s1@dev1, s3@dev1,(*\grayshade{s6@dev2}*) head
  sensor  s2@dev1, s4@dev1,(*\grayshade{s5@dev2, s7@dev2,}*) s8@dev3 flow

  conn j1->u, u->j2, j2->p1, p1->t, t->p2, p2->j3, j1->s1,
       u->s2, j2->s3, j2->s4,(*\grayshade{p1->s5, t->s6, p2->s7,}*)j3->s8
}
\end{lstlisting}
The process specifies the \iCPSDLinl{wdn} industrial domain. It includes three hardware devices \iCPSDLinl{dev1}, \iCPSDLinl{dev2}, and \iCPSDLinl{dev3}.
The system features two demand points \iCPSDLinl{r} and \iCPSDLinl{d}, a junction \iCPSDLinl{j}, two pipes \iCPSDLinl{p1} and \iCPSDLinl{p2}, a tank \iCPSDLinl{t}, and a pump actuator \iCPSDLinl{u} controlled by device \iCPSDLinl{dev1}.
Additionally, three head sensors and five flow sensors are deployed at their corresponding sensing devices. The process also defines the interconnections between components.

The following \iCPSDL code:
\begin{lstlisting}[language=iCPSDL, caption={Semantic Reasoning using iCPS-DL}, label={lst:iCPSDL_reasoning}]
    seg := translate simple
    trees := traverse t.head seg
    lconfig := configure trees[1] agents controller u
    gconfig := compose lconfig
\end{lstlisting}
translates process \iCPSDLinl{simple} into the corresponding state estimation graph, \iCPSDLinl{seq}, depicted in Fig.~\ref{fig:WDN_IP}~(middle). Each physical component is translated into the state (pentagon shapes) and estimator nodes (square shapes) as defined by its class. The translation function then interconnects state nodes, estimator nodes, and sensing points (triangle shapes). It then identifies a forest of estimation trees, rooted at node \iCPSDLinl{t.head}, by traversing state estimation graph \iCPSDLinl{seg}.
Fig.~\ref{fig:WDN_IP} (bottom) presents an overlay of the seven identified estimation trees. Each tree details an estimation or measurement of the \iCPSDLinl{t.head} property. Blue and red highlight the estimation trees corresponding to the two control schemes described in Section~\ref{subsec:running_example}. The diagram also maps tree nodes to hardware devices.
Assuming that \iCPSDLinl{trees[1]} accesses the red estimation tree, the next command uses the agent repository \iCPSDLinl{agents} to produce local configuration \iCPSDLinl{lconfig} from Fig.~\ref{fig:session_syntax}.
The last command composes global protocol \iCPSDLinl{gconfig} validating that \iCPSDLinl{lconfig} is live. The control loop configuration deployment module will generate and deploy within the \iCPS network the agents corresponding to \iCPSDLinl{lconfig}. 

Assume now that device \iCPSDLinl{dev2} presents a failure. The event manager will detect the failure and update the description of process \iCPSDLinl{simple} without device \iCPSDLinl{dev2} and sensors, \iCPSDLinl{s5}, \iCPSDLinl{s6}, and \iCPSDLinl{s7}, i.e., the shaded elements in Listing~\ref{lst:iCPSDL_process}. It will then invoke the semantic reasoning engine to run code in Listing~\ref{lst:iCPSDL_reasoning} and produce a new control loop configuration. Specifically, it identifies two estimation trees that utilise the tank mass estimator with sensor \iCPSDLinl{s8} measuring the tank outflow. Additionally, the tank inflow is estimated using the junction mass estimator for \iCPSDLinl{j}, with demand measured at sensor \iCPSDLinl{s4}. One tree measures the junction mass inflow at sensor \iCPSDLinl{s2}, while the other estimates it using the link estimator for the pump, based on the head measurements from sensors \iCPSDLinl{s1} and \iCPSDLinl{s3}.



\section{Conclusion and Future Work}
\label{sec:conclusion}
\dkcol{
This work introduced a framework for the autonomic reconfiguration of CPS. It presents \iCPSDL, a language that enables an autonomic supervisor to describe and reason over \iCPS control loop configurations. The \iCPSDL provides semantics for describing industrial domain ontology schemas, which are used to define industrial process knowledge graphs within the domain. The semantics also define agent interactions using behavioural types. Reasoning over the knowledge graph can identify a set of agents, whose deployment can configure a control loop. Moreover, behavioural types theory ensures safe and live agent interaction.
The \iCPSDL expressive capabilities are demonstrated through a representation of the WDN domain while its autonomic enabling capabilities are showcased via an instructive example from this domain.

Future work focuses on conducting in-depth evaluation of the \iCPSDL framework using the KIOS Water Network Testbed \cite{VRACHIMIS2022655}, to validate its flexibility and robustness. 
Additionally, the type-theoretic foundations of \iCPSDL will serve as a basis for a new toolchain~\cite{book:behavioural-types} that implements the modules of the autonomic supervisor. 
This toolchain may include: 
code generation by translating control loop local configurations into agent code templates that incorporate communication operations and algorithmic logic;
communication libraries facilitating the composability and deployment of agents;
a CPS programming language integrating behaviourally typed communication primitives and knowledge graph reasoning; and
a tool offering visual representations of behavioural types to provide insights into control loop configurations.
}

    \bibliographystyle{ieeetr}
    \bibliography{references}           

\begin{thebibliography}{10}

\bibitem{Sisinni20184724}
E.~Sisinni, A.~Saifullah, S.~Han, U.~Jennehag, and M.~Gidlund, ``Industrial internet of things: Challenges, opportunities, and directions,'' {\em IEEE Transactions on Industrial Informatics}, vol.~14, no.~11, p.~4724 – 4734, 2018.

\bibitem{isermann2006fault}
R.~Isermann, {\em Fault-diagnosis systems: An introduction from fault detection to fault tolerance}.
\newblock Springer Science \& Business Media, 2006.

\bibitem{Kephart2003}
J.~Kephart and D.~Chess, ``The vision of autonomic computing,'' {\em Computer}, vol.~36, no.~1, pp.~41--50, 2003.

\bibitem{IHISE}
S.~G. Vrachimis, S.~Timotheou, D.~G. Eliades, and M.~M. Polycarpou, ``Iterative hydraulic interval state estimation for water distribution networks,'' {\em Journal of Water Resources Planning and Management}, vol.~145, no.~1, p.~04018087, 2019.

\bibitem{BUBA}
S.~G. Vrachimis, D.~G. Eliades, and M.~M. Polycarpou, ``Calculating chlorine concentration bounds in water distribution networks: A backtracking uncertainty bounding approach,'' {\em Water Resources Research}, vol.~57, no.~5, p.~e2020WR028684, 2021.

\bibitem{book:behavioural-types}
D.~Ancona, V.~Bono, M.~Bravetti, J.~Campos, G.~Castagna, P.-M. Deni\'{e}lou, S.~J. Gay, N.~Gesbert, E.~Giachino, R.~Hu, E.~B. Johnsen, F.~Martins, V.~Mascardi, F.~Montesi, R.~Neykova, N.~Ng, L.~Padovani, V.~T. Vasconcelos, and N.~Yoshida, ``Behavioral types in programming languages,'' {\em Found. Trends Program. Lang.}, vol.~3, p.~95–230, July 2016.

\bibitem{less_is_more}
A.~Scalas and N.~Yoshida, ``Less is more: Multiparty session types revisited,'' {\em Proc. ACM Program. Lang.}, vol.~3, jan 2019.

\bibitem{singh2024cyber}
N.~Singh, P.~K. Panigrahi, Z.~Zhang, and S.~M. Jasimuddin, ``Cyber-physical systems: a bibliometric analysis of literature,'' {\em Journal of Intelligent Manufacturing}, pp.~1--37, 2024.

\bibitem{semiotics_reconfiguration}
G.~M. Milis, C.~G. Panayiotou, and M.~M. Polycarpou, ``Semiotics: Semantically enhanced {IoT}-enabled intelligent control systems,'' {\em IEEE Internet of Things Journal}, vol.~6, no.~1, pp.~1257--1266, 2019.

\bibitem{semiotics_HVAC}
G.~M. Milis, C.~G. Panayiotou, and M.~M. Polycarpou, ``{IoT}-enabled automatic synthesis of distributed feedback control schemes in smart buildings,'' {\em IEEE Internet of Things Journal}, vol.~8, no.~4, pp.~2615--2626, 2021.

\bibitem{Nicolaou2018}
N.~Nicolaou, D.~G. Eliades, C.~Panayiotou, and M.~M. Polycarpou, ``Reducing vulnerability to cyber-physical attacks in water distribution networks,'' in {\em 2018 International Workshop on Cyber-physical Systems for Smart Water Networks (CySWater)}, pp.~16--19, 2018.

\bibitem{Barrere2020}
M.~Barrère, C.~Hankin, N.~Nicolaou, D.~G. Eliades, and T.~Parisini, ``Measuring cyber-physical security in industrial control systems via minimum-effort attack strategies,'' {\em Journal of Information Security and Applications}, vol.~52, p.~102471, 2020.

\bibitem{Kouzapas2023}
D.~Kouzapas, N.~Stylianidis, C.~G. Panayiotou, and D.~G. Eliades, ``Ontology-based reasoning to reconfigure industrial processes for energy efficiency,'' in {\em Proc. of 31st Mediterranean Conference on Control and Automation (MED)}, pp.~79--84, 2023.

\bibitem{iec-61499}
J.~Christensen, T.~Strasser, A.~Valentini, V.~Vyatkin, and A.~Zoitl, ``The {IEC} 61499 function block standard: Overview of the second,'' in {\em ISA Automation Week 2012}, pp.~1--12, 2012.

\bibitem{SOA-ASM}
W.~Dai, V.~N. Dubinin, J.~H. Christensen, V.~Vyatkin, and X.~Guan, ``Toward self-manageable and adaptive industrial cyber-physical systems with knowledge-driven autonomic service management,'' {\em IEEE Transactions on Industrial Informatics}, vol.~13, no.~2, pp.~725--736, 2017.

\bibitem{ibm2005architectural}
IBM, ``An architectural blueprint for autonomic computing,'' white paper, 2006.

\bibitem{LYU2024102627}
G.~Lyu and R.~W. Brennan, ``Evaluating a self-manageable architecture for industrial automation systems,'' {\em Robotics and Computer-Integrated Manufacturing}, vol.~85, p.~102627, 2024.

\bibitem{haller:hal-02016313}
A.~Haller, K.~Janowicz, S.~D. Cox, D.~Le~Phuoc, K.~Taylor, and M.~Lefran{\c c}ois, {\em {Semantic Sensor Network Ontology}}.
\newblock W3C Recommendation, W3C, Oct. 2017.

\bibitem{daniele2016interoperability}
L.~Daniele, M.~Solanki, F.~den Hartog, and J.~Roes, ``Interoperability for smart appliances in the iot world,'' in {\em Proc. of 15th International Semantic Web Conference}, pp.~21--29, Springer, 2016.

\bibitem{sensorml}
O.~G. Consortium, ``Sensor model language {(SensorML)}.'' \url{https://www.ogc.org/standards/sensorml}.
\newblock August, 2024.

\bibitem{delligatti2013sysml}
L.~Delligatti, {\em SysML Distilled: A Brief Guide to the Systems Modeling Language}.
\newblock Addison-Wesley Professional, 1st~ed., 2013.

\bibitem{AADL}
D.~Blouin and E.~Borde, {\em AADL: A Language to Specify the Architecture of Cyber-Physical Systems}, pp.~209--258.
\newblock Cham: Springer International Publishing, 2020.

\bibitem{MATTSSON1998501}
S.~E. Mattsson, H.~Elmqvist, and M.~Otter, ``Physical system modeling with {Modelica},'' {\em Control Engineering Practice}, vol.~6, no.~4, pp.~501--510, 1998.

\bibitem{knowledge_graph_survey}
S.~Ji, S.~Pan, E.~Cambria, P.~Marttinen, and P.~S. Yu, ``A survey on knowledge graphs: Representation, acquisition, and applications,'' {\em IEEE Transactions on Neural Networks and Learning Systems}, vol.~33, no.~2, pp.~494--514, 2022.

\bibitem{STSL}
T.~Li, J.~Liu, J.~Kang, H.~Sun, W.~Yin, X.~Chen, and H.~Wang, ``Stsl: A novel spatio-temporal specification language for cyber-physical systems,'' in {\em 2020 IEEE 20th International Conference on Software Quality, Reliability and Security (QRS)}, pp.~309--319, 2020.

\bibitem{Lingua_Franca_CPS}
S.~Lin, Y.~A. Manerkar, M.~Lohstroh, E.~Polgreen, S.-J. Yu, C.~Jerad, E.~A. Lee, and S.~A. Seshia, ``Towards building verifiable {CPS} using {Lingua Franca},'' {\em ACM Trans. Embed. Comput. Syst.}, vol.~22, Sept. 2023.

\bibitem{eventB}
J.-R. Abrial, {\em Modeling in Event-B : System and Software engineering.}
\newblock New York: Cambridge University Press, 2010.

\bibitem{agentspeak}
A.~S. Rao, ``Agentspeak(l): Bdi agents speak out in a logical computable language,'' in {\em Agents Breaking Away} (W.~Van~de Velde and J.~W. Perram, eds.), (Berlin, Heidelberg), pp.~42--55, Springer Berlin Heidelberg, 1996.

\bibitem{jason}
R.~H. Bordini, J.~F. H\"{u}bner, and M.~Wooldridge, {\em Programming Multi-Agent Systems in AgentSpeak using Jason (Wiley Series in Agent Technology)}.
\newblock Hoboken, NJ, USA: John Wiley \& Sons, Inc., 2007.

\bibitem{rewrite-cps}
B.~Lion, F.~Arbab, and C.~Talcott, ``A rewriting framework for interacting cyber-physical agents,'' in {\em Leveraging Applications of Formal Methods, Verification and Validation. Adaptation and Learning}, (Cham), pp.~356--372, Springer Nature Switzerland, 2022.

\bibitem{Hippo-CPS}
M.~Wojnakowski, M.~Pop{\l}awski, R.~Wi{\'{s}}niewski, and G.~Bazyd{\l}o, ``{Hippo-CPS}: Verification of boundedness, safeness and liveness of petri net-based cyber-physical systems,'' in {\em Technological Innovation for Digitalization and Virtualization} (L.~M. Camarinha-Matos, ed.), (Cham), pp.~74--82, Springer International Publishing, 2022.

\bibitem{petri-net-cps}
X.~He, ``Modeling and analyzing cyber physical systems using high level petri nets,'' in {\em 2018 IEEE International Conference on Software Quality, Reliability and Security Companion (QRS-C)}, pp.~469--476, 2018.

\bibitem{SHARF2024}
M.~Sharf, B.~Besselink, and K.~H. Johansson, ``Contract composition for dynamical control systems: Definition and verification using linear programming,'' {\em Automatica}, vol.~164, p.~111637, 2024.

\bibitem{multiparty_session_types}
K.~Honda, N.~Yoshida, and M.~Carbone, ``Multiparty asynchronous session types,'' {\em Journal of the ACM}, vol.~63, Mar 2016.

\bibitem{gentle_multiparty}
N.~Yoshida and L.~Gheri, ``A very gentle introduction to multiparty session types,'' in {\em Proc. of 16th International Distributed Computing and Internet Technology}, p.~73–93, 2020.

\bibitem{VRACHIMIS2022655}
S.~Vrachimis, S.~Santra, A.~Agathokleous, P.~Pavlou, M.~Kyriakou, M.~Psaras, D.~G. Eliades, and M.~M. Polycarpou, ``Watersafe: A water network benchmark for fault diagnosis research,'' in {\em Proc. 11th IFAC Symposium on Fault Detection, Supervision and Safety for Technical Processes SAFEPROCESS 2022}, vol.~55 of {\em IFAC-PapersOnLine}, pp.~655--660, 2022.

\end{thebibliography}

\appendices

\newcommand{\figMotivatingExample}{Fig.~2 of the main paper\xspace}
\newcommand{\secMotivatingExample}{Sec.~II-D of the main paper\xspace}

\section{A Semantic Theory for Communication Interactions}

\label{app:sec:sessions}

This section presents the theoretical framework for the agent interaction semantics of \iCPSDL.
The semantic framework for agent interactions is based on
\textit{behavioural types}~\cite{book:behavioural-types},
which is a family of frameworks for
semantic reasoning over message passing communication. 

Message passing can be represented
using state machines. 
State machine actions describe the ability of an
agent to send messages to, or receive messages
from its communicating adversaries. 
Instead of the usual quintuple representation of
state machines,
behavioural types use a textual
notation, called
{\em local protocol} 
to define agent interaction. For an introduction to {\em multiparty session types}, as used in this section, the reader is referred to the following tutorial paper~\cite{gentle_multiparty}.

\begin{table*}
{\small
    \begin{tabular}{ll}
        \hline
        \hline
        participants & $\Participants = \set{\p. \q, \dots}$
        \tstrut
        \bstrut
        \\
        \hline
        message types & $\U \in \Types = \set{\nat, \bool, \ON, \OFF, \dots}$
        \tstrut
        \bstrut
        \\
        \hline
        alphabet & $\element \in \Participants \times \set{\outputSymbol, \inputSymbol} \times \Types = \alphabet$
        \tstrut
        \bstrut
        \\
        \hline
        local types
            &  $\local \in \Local = \set{\tinact}
            \cup
            \set{\element \sep \local \setbar \element \in \alphabet, \local \in \Local}
            \cup
            \set{\local[1] \tor \local[2] \setbar \local[1], \local[2] \in \Local}
            \cup
            \set{\trec{t} \local \setbar \local \in \Local}
            \cup \set{\tvar{t}, \tvar{t_1}, \dots}$
        \tstrut
        \bstrut
        \\
        \hline
        participants
        & $\participants: \Local \times 2^{|\Participants|}$
        \tstrut
        \bstrut
        \\ function &
        $
        \participants[\tinact] = \es, \qquad 
        \participants[\tvar{t}] = \es, \qquad
        \participants[\out{\p}{\U} \local] = \set{\p} \cup \participants[\local], \qquad
        \participants[\inp{\p}{\U} \local] = \set{\p} \cup \participants[\local],$
        \\
        & $\participants[{\local[1] \tor \local[2]}] = \participants[{\local[1]}] \cup \participants[{\local[2]}], \qquad
        \participants[\trec{t} \local] = \participants[\local]$
        \bstrut
        \\
        \hline
        syntactic size & $\size: \Local \to \mathbb{N}$
        \tstrut
        \bstrut
        \\
        &
        $\size[\tinact] = \size[\tvar{t}] = 1,
        \qquad
        \size[\out{\p}{\U} \local] = \size[\inp{\p}{\U} \local] = 1 + \size[\local],
        \qquad
        \size[{\local[1] \tor \local[2]}] = \size[{\local[1]}] + \size[{\local[2]}],
        \qquad
        \size[\trec{t} \local] = \size[\local]$
        \bstrut
        \\
        \hline
        substitution &
        $[\substff]: \Local \times \Local \times \set{\tvar{t}, \tvar{t_1}, \dots} \to \Local$
        \tstrut
        \bstrut
        \\ &
        $\tinact \substt{\local}{\tvar{t}} = \tinact,$ \qquad
        $\tvar{t} \substt{\local}{\tvar{t}} = \local$, \qquad
        $\tvar{t'} \substt{\local}{\tvar{t}} = \tvar{t'}$ if $\tvar{t} \not= \tvar{t'}$, \qquad
        $\element \sep \locald \substt{\local}{\tvar{t}} = \element \sep (\locald \substt{\local}{\tvar{t}})$
        \\[1mm]
        & $(\local[1] \tor \local[2]) \substt{\local}{\tvar{t}} = (\local[1] \substt{\local}{\tvar{t}}) \tor (\local[2] \substt{\local}{\tvar{t}})$,
        \qquad
        $\trec{t'} \locald \substt{\local}{\tvar{t}} = 
        \left\{
        \begin{array}{ll}
            \trec{t'} (\locald \substt{\local}{\tvar{t}}) & \text{ if } \tvar{t} \not= \tvar{t'}
            \\
            \trec{t'} \locald & \text{ if } \tvar{t} = \tvar{t'}
        \end{array}
        \right.$
        \\[3mm]
        \hline
        transition & $\by \subseteq \Local \times \alphabet \times \Local$
        \tstrut
        \bstrut
        \\ &
            $\element \sep \local \by[\element] \local$, for all $\element \sep \local \in \Local$,
        \qquad
            $\local[1] \tor \local[2] \by[\element] \locald$ if $\local[1] \by[\element] \locald$ or $\local[2] \by[\element] \locald$,
        \\ &
            $\trec{t} \local \by[\element] \locald$ if $\local \substt{\trec{t} \local}{\tvar{t}} \by[\element] \locald$
        \bstrut
        \\
        \hline
        alphabet &
            $\act{\p}{\q}{\U} \in \calphabet = \Participants \times \set{\passSymbol} \times \Participants \times \Types$
        \tstrut
        \bstrut
        \\
        \hline
        local configuration
        & $\LL \in \LSet = \set{\LL \setbar \LL: \Participants \rightharpoonup \Local}$
        \tstrut
        \bstrut
        \\
        \hline
        active participants &
            $\activef: \LSet \to 2^{|\Participants|}$
        \tstrut
        \bstrut
        \\ &
            $\activep{\LL} = \domain{\LL} \backslash \set{\p \setbar \p: \tinact \in \LL}$
        \bstrut
        \\
        \hline
        syntactic size & $\size: \LSet \to \mathbb{N}$
        \tstrut
        \bstrut
        \\
        & $\size[\set{\p[i]: \local[i] \setbar i \in I}] = \sum\limits_{i \in I} \size[{\local[i]}]$
        \bstrut
        \\
        \hline
        communication &
            $\by \subseteq \LSet \times \calphabet \times \LSet$
        \tstrut
        \bstrut
        \\ transition &
            $\LL, \p: \local[1], \q: \local[2] \by[\act{\p}{\q}{\U}] \LL, \p: \locald[1], \q: \locald[2]$
            whenever
            $\local[1] \by[\actOut{\q}{\U}] \locald[1]$
            and
            $\local[2] \by[\actInp{\p}{\U}] \locald[2]$
        \bstrut
        \\
        \hline
        deadlock-freedom & \df{\LL}
        \tstrut
        \bstrut
        \\
        &   i) $\exists \LLd, \element \in \calphabet: (\LL \by[\element] \LLd)$ or $\activep{\LL} = \es$; and
        \\
        &   ii) $\forall \LLd: [(\exists \element \in \calphabet, \LL \by[\element] \LLd) \implies \df{\LLd}]$
        \\
        liveness & \live{\LL}
        \tstrut
        \bstrut
        \\ &
            i) $\forall \p \in \activep{\LL}, \exists n \geq 0: [\exists \LL[1], \dots, \LL[n], \LLd$ and 
            $\exists \element[1], \dots, \element[n], \element \in \calphabet:$
        \\ &
            $(\LL \by[{\element[1]}] \LL[1], \dots, \LL \by[{\element[n]}] \LL[n]$ and
            $\LL[n] \by[\act{\p}{\q}{\U}] \LLd or \LL[n] \by[\act{\q}{\p}{\U}] \LLd)]$; and
        \\
            &
                ii) $\forall \LLd, [(\exists \element \in \calphabet, \LL \by[\element] \LLd) \implies \live{\LLd}]$
        \bstrut
        \\
        \hline
        global &
            $\glob \in \Glob = \set{\ginact}
        \cup
            \set{\element \sep \glob \setbar \element \in \calphabet, \glob \in \Glob}
        \cup
            \set{\glob[1] \gor \glob[2] \setbar \glob[1], \glob[2] \in \Glob}
        \cup
            \set{\rec{t} \glob \setbar \glob \in \Glob} \cup \set{\var{t}, \var{t_1}, \dots}$
        \tstrut
        \bstrut
        \\
        \hline
        projection/ &
        $\proves \subseteq \LSet \times \Glob$
        \tstrut
        \\[2mm]
        composition & \multicolumn{1}{c}{
            $\set{\p[i]: \tinact \setbar i \in I} \proves \ginact$
            \qquad \quad
            $\set{\p[i]: \tvar{t} \setbar i \in I} \proves \var{t}$
        }
        \\[2mm]
        & \multicolumn{1}{c}{
            $\tree{
                \LL, \p: \local, \q: \locald \proves \glob
            }{
                \LL, \p: \out{\q}{\U} \local, \q: \inp{\p}{\U} \locald \proves \pass{\p}{\q}{\U} \glob
            }$
        \qquad
            $\tree{
                \forall \in I,\ \LL, \p: \local[i], \q: \locald[i] \proves \globd[i]
            }{
                \LL, \p: \Tor{i \in I} \out{\q}{\U[i]} \local[i], \q: \Tor{i \in I} \inp{\p}{\U[i]} \locald[i] \proves \Gor{i \in I} \pass{\p}{\q}{\U[i]} \glob[i]
            }$
        }
        \\[6mm]
        & \multicolumn{1}{c}{
            $\tree {
                \set{\p[i]: \tvar{t} \setbar i \in I}, \set{\p[j]: \local[j] \setbar j \in J} \proves \glob
                \qquad
                \forall j \in J, \local[j] \not= \tvar{t}
            }{
                \set{\p[i]: \tinact \setbar i \in I}, \set{\p[j]: \trec{t} \local[j] \setbar j \in J} \proves \rec{t} \glob
            }$
        }
        \bstrut
        \\[4mm]
        \hline
        \hline
    \end{tabular}
}
\caption{Formal definitions for agent interaction semantics~\label{tab:formal_sessions}}
\end{table*}

Table~\ref{tab:formal_sessions} presents a table with the formal definitions for establishing the agent interaction semantics.

\subsection{Local protocols}
This section defines the theory for local protocols, which are textual descriptions of the commmunication interaction of a single agent.

Set
\[
\Participants = \set{\p, \q, \dots}
\]
is a set of interacting agents, such as controls, estimators, sensors and actuators, referred to as the set of {\em participants}.

Set 
\[
\Types = \set{\nat, \bool, \ON, \OFF, \dots}
\]
is a set of {\em message types}, 
e.g.,~integers (\nat), booleans (\bool),
enumerations labels (\ON, \OFF, \dots), etc.
Symbol \U denotes elements in \Types, i.e., $\U \in \Types$.

Set
\[
\alphabet = \Participants \times \set{\outputSymbol, \inputSymbol} \times \Types
\]
is an alphabet, whose elements,
$\element \in \alphabet$, describe participant {\em actions}.
The {\em send} action, \actOut{\p}{\U},
(written for convenience
instead of $(\p, \outputSymbol, \U)$)
describes the sending of a message with
type \U to participant \p. Dually,
the {\em receive} action, \actInp{\p}{\U} ,
describes the reception of a message of
type \U from participant \p.

Actions are used to type the send and receive
operators in a program. For example,
\[
\mathtt{send(p, 5)}: \actOut{\p}{\nat}
\]
denotes the typing of programming operator $\mathtt{send(p, 5)}$, which sends a value $5$ to a participant $\p$, with action \actOut{\p}{\nat}. Similarly,
\[
\mathtt{bool\ x = receive(p)}: \actInp{\p}{\bool}
\]
denotes the typing of programming operator $\mathtt{receive(p)}$, which receives a boolean value from a participant $\p$, with action \actInp{\p}{\bool}.

%
The {\em composition} of actions constructs the set of local protocols. 
Formally, the set of local protocols is inductively defined as:
\[
\begin{array}{rcl}
	\Local  &=&     \set{\element \sep \local \setbar \element \in \alphabet, \local \in \Local}
	\\      &\cup&  \set{\local[1] \tor \local[2] \setbar \local[1], \local[2] \in \Local}
	\\      &\cup&  \set{\trec{t} \local \setbar \local \in \Local}
	\\      &\cup&  \set{\tvar{t}, \tvar{t_1}, \dots}
	\\      &\cup&  \set{\tinact}
\end{array}
\]
\begin{enumerate}
	\item
	Local protocol $\element \sep \local$ denotes {\em sequential} composition, where a state \local is preceded by an action, $\element \in \alphabet$,
	i.e.,~a protocol in state $\element \sep \local$ observes
	action \element and proceeds to state \local.
	For example, local protocol
	\[
	\out{\p}{\nat} \inp{\q}{\bool} \local
	\]
	describes a program that first sends an integer value, type \nat, to participant \p, followed by sending a boolean value, type \bool, to participant \q. After these operations protocol \local describes the remaining program.
	\item
	Local protocol $\local[1] \tor \local[2]$ denotes a {\em choice} composition between local protocol \local[1] and local protocol \local[2]. For example, protocol:
	\[
	\out{\p}{\U[1]} \local[1] \tor \out{\p}{\U[2]} \local[2]
	\]
	declares a choice to send either a message of type \U[1] or a message type \U[2] to participant \p and then continue with protocol \local[1] or \local[2], respectively.
	
	Protocols of the form $\local[1] \tor \dots \tor \local[n]$, for some $n > 0$, are often abbreviated using notation $\Tor{1 \leq i \leq n} \local[i]$.
	\item
	Local protocol $\trec{t} \local$ declares a {\em recursive} protocol \local within a loop with label $\tvar{t}$, whereas label \tvar{t} denotes an execution jump to loop label $\tvar{t}$.
	For example, local protocol
	\[
	\trec{t} \inp{\p}{\U} \tvar{t}
	\]
	describes a reception of message \U from participant \p in a loop.
	\item   
	Local protocol \tinact is the {\em inactive} local protocol denoting the termination of a protocol.
	The inactive protocol is often omitted, 
	e.g., protocol $\out{\p}{\U[1]} \inp{\q}{\U[2]} \tinact$ is written as $\out{\p}{\U[1]} \ninp{\q}{\U[2]}$.
\end{enumerate}

The following example demonstrates the local protocols for the agents in \figMotivatingExample. 
\begin{example}[Sensor and estimator agents]
	\rm
	The local protocols for the flow sensors and the estimator in \figMotivatingExample 
	are defined, respectively, as:
	\begin{eqnarray*}
		\local[{\flowSensor}] &=& \trec{t} \out{\estpp}{\flowt} \tvar{t},
		\\
		\local[\estpp] &=& \trec{t} \inp{\flowSensor[1]}{\flowt} \inp{\flowSensor[2]}{\flowt} \out{\controller}{\headt} \tvar{t}.
	\end{eqnarray*}
	A flow sensor loops and sends a \flowt value to estimator \estpp. The estimator also loops and receives a \flowt value from the first sensor, \flowSensor[1], followed by a \flowt value from the second sensor, \flowSensor[2], and then estimates and sends a \headt value to the controller \controller.
	\qed
\end{example}

Function $\participants: \Local \to 2^{|\Participants|}$
returns the participants of a local protocol, inductively defined as:
\begin{eqnarray*}
	\participants[\tinact] &=& \es,
	\\
	\participants[\tvar{t}] &=& \es,
	\\
	\participants[\out{\p}{\U} \local] &=& \set{\p} \cup \participants[\local],
	\\
	\participants[\inp{\p}{\U} \local] &=& \set{\p} \cup \participants[\local],
	\\
	\participants[{\local[1] \tor \local[2]}] &=& \participants[{\local[1]}] \cup \participants[{\local[2]}],
	\\
	\participants[\trec{t} \local] &=& \participants[\local]
\end{eqnarray*}
For example,
$\participants[{\local[\estpp]}] = \set{\flowSensor[1], \flowSensor[2], \controller}$. 

Function $\size: \Local \to \mathbb{N}$ return the {\em syntactic size} of a local protocol, inductively defined as:
\begin{eqnarray*}
	\size[\tinact] &=& 1, 
	\\
	\size[\tvar{t}] &=& 1,
	\\
	\size[\out{\p}{\U} \local] &=& 1 + \size[\local],
	\\
	\size[\inp{\p}{\U} \local] &=& 1 + \size[\local],
	\\
	\size[{\local[1] \tor \local[2]}] &=& \size[{\local[1]}] + \size[{\local[2]}],
	\\
	\size[\trec{t} \local] &=& \size[\local]
\end{eqnarray*}

A {\em local transition} relation defines
the interaction semantics of a local protocol.
The definition of the transition relation,
requires the definition of
the {\em substitution} function. The substitution function
inputs a local protocol \local[1],
a local protocol \local[2], and
a loop label \tvar{t}, and replaces all
instances of \tvar{t} in \local[1] with \local[2]. 
The substitution function $[\substff]: \Local \times \Local \times \set{\tvar{t}, \tvar{t_1}, \dots} \to \Local$
is inductively defined as:

\begin{eqnarray*}
	\tinact \substt{\local}{\tvar{t}} &=& \tinact,
	\\[1mm]
	\tvar{t} \substt{\local}{\tvar{t}} &=& \local,
	\\[1mm]
	\tvar{t'} \substt{\local}{\tvar{t}} &=& \tvar{t'} \text{ if } \tvar{t} \not= \tvar{t'},
	\\[1mm]
	\element \sep \locald \substt{\local}{\tvar{t}} &=& \element \sep (\locald \substt{\local}{\tvar{t}}),
	\\[1mm]
	(\local[1] \tor \local[2]) \substt{\local}{\tvar{t}} &=& (\local[1] \substt{\local}{\tvar{t}}) \tor (\local[2] \substt{\local}{\tvar{t}}),
	\\[1mm]
	\trec{t'} \locald \substt{\local}{\tvar{t}} &=& 
	\left\{
	\begin{array}{ll}
		\trec{t'} (\locald \substt{\local}{\tvar{t}}) & \text{ if } \tvar{t} \not= \tvar{t'}
		\\
		\trec{t'} \locald & \text{ if } \tvar{t} = \tvar{t'}
	\end{array}
	\right.
\end{eqnarray*}

Substitution enables loop interactions, e.g., for type $\local[\flowSensor] = \trec{t} \out{\estpp}{\flowt} \tvar{t}$ substitution:
\[
\begin{array}{l}
	\out{\estpp}{\flowt} \tvar{t} \substt{\trec{t} \out{\estpp}{\flowt} \tvar{t}}{\tvar{t}} =
	\\ \hspace{4cm}
	\out{\estpp}{\flowt} \trec{t} \out{\estpp}{\flowt} \tvar{t}
\end{array}
\]
{\em unfolds} the body of the recursion of local protocol \local[\flowSensor].

A transition relation, $\by \subseteq \Local \times \alphabet \times \Local$,
is a set of triples
$(\local[1], \element, \local[2])$, where
state \local[1] {\em observes} action \element
and proceeds to state \local[2].
Triple $(\local[1], \element, \local[2]) \in \by$
is expressed in infix notation as $\local[1] \by[\element] \local[2]$.
The local transition relation is defined as:
\begin{itemize}
	\item   $\element \sep \local \by[\element] \local$, for all $\element \sep \local \in \Local$.
	\item   $\local[1] \tor \local[2] \by[\element] \locald$, if $\local[1] \by[\element] \locald$ or $\local[2] \by[\element] \locald$.
	\item   $\trec{t} \local \by[\element] \locald$, if $\local \substt{\trec{t} \local}{\tvar{t}} \by[\element] \locald$.
\end{itemize}

The following example demonstrates a loop transition.
\begin{example}[Loop transitions]
	\rm
	Consider local protocol $\local[\flowSensor] = \trec{t} \out{\estpp}{\flowt} \tvar{t}$.
	Building on the rules for the local transition, observe  that:
	\[
	\local[\flowSensor] \by[\actOut{\estpp}{\flowt}] \local[\flowSensor],
	\]
	since
	\begin{enumerate}[label=\roman*)]
		\item
		$\out{\estpp}{\flowt} \tvar{t} \substt{\local[\flowSensor]}{\tvar{t}} = \out{\estpp}{\flowt} \local[\flowSensor]$;
		and
		\item
		$\out{\estpp}{\flowt} \local[\flowSensor] \by[\actOut{\p}{\U}] \local[\flowSensor]$.
		\qed
	\end{enumerate}
\end{example}

\subsection{Local configurations}

A {\em role}, $\p: \local$, associates a participant, \p, with
a local protocol, \local. By extension, a {\em local configuration} is a mapping from participants to local protocols, i.e., a set of roles.
Set
\[
\LSet = \set{S \setbar S: \Participants \rightharpoonup \Local}
\]
is the set of mappings (partial functions)
from participants to local protocols.
The union of local configurations preserves
the partial function property;
$\LL[1], \LL[2] = \LL[1] \cup \LL[2]$ whenever
$\domain{\LL[1]} \cap \domain{\LL[2]} = \es$, and
undefined otherwise.

Function, $\activef: \LSet \to 2^{|\Participants|}$,
returns the {\em active} participants of a local configuration:
\[
\activep{\LL} = \domain{\LL} \backslash \set{\p \setbar \p: \tinact \in \LL}.
\]

Function $\size: \LSet \to \mathbb{N}$ returns the syntactic size of a local configuration:
\[
\size[\set{\p[i]: \local[i] \setbar i \in I}] = \sum\limits_{i \in I} \size[{\local[i]}]
\]
\begin{example}[A local configuration for the \WDNsimple]
\rm
\label{app:ex:local_configuration_running_example}
Consider the example in \figMotivatingExample 
and particularly the case
where the level sensor has failed and 
the control-loop deploys
a tank level estimator. The following
local configuration defines the
control loop configuration:
\[
\arraycolsep=2pt
\begin{array}{rcrl}
	\LLsimple   &=&     \estpp: & \trec{t} \inp{\flowSensor[1]}{\flowt} \inp{\flowSensor[2]}{\flowt} \out{\controller}{\headt} \tvar{t},
	\\ &&               \flowSensor[1]: & \trec{t} \out{\estpp}{\flowt} \tvar{t},
	\\ &&               \flowSensor[2]: & \trec{t} \out{\estpp}{\flowt} \tvar{t},
	\\ &&               \pumpp: & \trec{t} (\inp{\controller}{\ON} \tvar{t} \tor  \inp{\controller}{\OFF} \tvar{t}),
	\\ &&               \controller: & \trec{t} \inp{\estpp}{\headt} (\out{\pumpp}{\ON} \tvar{t} \tor \out{\pumpp}{\OFF} \tvar{t})
\end{array}
\]
%
Roles \flowSensor[1], \flowSensor[2],
output to \est,
the inflow and output values, respectively,
which in turn sends the estimated \headt value
to the controller, \controller. The controller then 
sends a binary signal, \ON or \OFF, to control
the pump.
%
%
%
\qed
\end{example}

Within a local configuration, roles {\em synchronise}
over their {\em dual} (send/receive actions)
interactions to define a {\em communication transition}
relation.
Alphabet \calphabet is the set of
all synchronisation actions:
\[
\calphabet = (\Participants \times \set{\passSymbol} \times \Participants \times \Types).
\]
The {\em message pass} action,
written as \act{\p}{\q}{\U} instead of $(\p, \passSymbol,\q, \U)$,
denotes the passing of message \U
from participant \p to participant \q.
%
A triple $\LL[1] \by[\element] \LL[2]$ denotes
that the agents in local configuration \LL[1] interact
to observe action, $\element \in \calphabet$, and
proceed to \LL[2].
The communication transition relation,
$\by \subseteq \LSet \times \calphabet \times \LSet$
is defined as:
\[
\begin{array}{l}
\LL, \p: \local[1], \q: \local[2] \by[\act{\p}{\q}{\U}] \LL, \p: \locald[1], \q: \locald[2],
\\
\qquad \qquad \qquad \qquad \text{ whenever }
\local[1] \by[\actOut{\q}{\U}] \locald[1]
\text{ and }
\local[2] \by[\actInp{\p}{\U}] \locald[2].
\end{array}
\]

It is desirable for a local configuration to satisfy the following properties:
i) The {\em deadlock-freedom} property requires that a local context can either perform an observable action or ensure all its roles remain inactive.
ii) The {\em liveness} property expects that every active role in a local configuration can eventually participate in an action.
The following definitions formally express the two properties:
\begin{itemize}
\item
A local context \LL is deadlock-free whenever:
\begin{itemize}
\item  
Either there exist $\LLd$ and $\element \in \calphabet$
such that $\LL \by[\element] \LLd$,
or $\activep{\LL} = \es$.
\item
For all \LLd, such that $\LL \by[\element] \LLd$ for some $\element \in \calphabet$, \LLd is deadlock-free.
\end{itemize}
\item
A local configuration \LL is live whenever:
\begin{itemize}
\item
For all $\p \in \activep{\LL}$,
there exist
$\LL[1], \dots, \LL[n], \LLd$
and
$\element[1], \dots, \element[n], \element \in \calphabet$, with $n \geq 0$,
such that
$\LL \by[{\element[1]}] \LL[1], \dots, \LL \by[{\element[n]}] \LL[n]$,
and
$\LL[n] \by[\element] \LLd$, where \element is of the form \act{\p}{\q}{\U} or \act{\q}{\p}{\U}.
\item   
For all \LLd, such that $\LL \by[\element] \LLd$ for some $\element \in \calphabet$, \LLd is live.
\end{itemize}
\end{itemize}

The following simple example demonstrates the deadlock-freedom and liveness properties.
\begin{example}
\rm
Consider local configurations:
\[
\arraycolsep=2pt
\begin{array}{rcrl}
\LL[1] &=& \sensor: & \trec{t} \out{\controller}{\flowt} \tvar{t},
\\     && \controller: & \trec{t} \inp{\sensor}{\flowt} \tvar{t}
\\[1mm]
\LL[2] &=& \estpp: & \nout{\controller}{\headt}
\end{array}
\]
Local configuration \LL[1] is deadlock-free and live.  In contrast, \LL[2] is neither deadlocked nor live.
Furthermore, the combined local configuration $\LL[3] = \LL[1], \LL[2]$ is deadlock-free but not live.
This verifies that deadlock-freedom does not imply liveness, since
\LL[3] can always observe transition
$\LL[3] \by[\act{\sensor}{\controller}{\flowt}] \LL[3]$
but role \estpp never participates in any action.
\end{example}

The following theorem states that a live local configuration is always deadlock-free.
\begin{theorem}
\rm
If \LL is live then \LL is deadlock-free. 
\end{theorem}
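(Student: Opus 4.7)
The plan is to proceed by coinduction, since both \live{\cdot} and \df{\cdot} are manifestly coinductive predicates (each has a local condition plus a universal clause over one-step successors). I will exhibit the relation $R = \set{\LL \setbar \live{\LL}}$ as a post-fixed point of the functional associated with deadlock-freedom, which immediately yields $R \subseteq \set{\LL \setbar \df{\LL}}$.

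The argument reduces to verifying the two clauses of \df{\LL} from \live{\LL}. For the local clause, consider any \LL with $\live{\LL}$. If $\activep{\LL} = \es$ the condition holds trivially. Otherwise, pick any $\p \in \activep{\LL}$; by liveness there exist $n \geq 0$ and a trace $\LL \by[{\element[1]}] \LL[1] \by[{\element[2]}] \cdots \by[{\element[n]}] \LL[n] \by[\element] \LLd$ whose final label involves \p. In the case $n = 0$ we directly have $\LL \by[\element] \LLd$, while for $n > 0$ the first step $\LL \by[{\element[1]}] \LL[1]$ already witnesses $\exists \LLd', \element' \in \calphabet: \LL \by[{\element'}] \LLd'$. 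Either way, the first clause of deadlock-freedom is established.

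For the coinductive clause, suppose $\LL \by[\element] \LLd'$ for some $\element \in \calphabet$. The second clause of \live{\LL} gives $\live{\LLd'}$, so $\LLd' \in R$, which is exactly what the coinductive principle requires to close the case. Having verified that $R$ respects both clauses of the deadlock-freedom functional, the coinduction principle yields $\LL \in R \implies \df{\LL}$.

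I do not expect a significant obstacle here; the statement is essentially a sanity check that the liveness definition subsumes progress. The only delicate point is making precise what ``a trace'' means in the first liveness clause (the notation $\LL \by[{\element[i]}] \LL[i]$ should be read as a sequential composition $\LL[i-1] \by[{\element[i]}] \LL[i]$ starting from $\LL[0] = \LL$), but once this is fixed the case $n=0$ needs to be handled separately from $n > 0$ to cover the possibility that \LL itself performs the liveness-witnessing action without any preceding steps.
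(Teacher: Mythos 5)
Your proof is correct, and it takes a (mildly) different route from the paper's. The paper disposes of the theorem in one sentence, arguing by contrapositive: a configuration that is deadlocked cannot be live, i.e.\ if \df{\LL} fails then some reachable configuration is stuck while still having an active participant, and that participant's clause (i) of \live{\cdot} is then unsatisfiable. You instead argue directly and coinductively, exhibiting the set of live configurations as a post-fixed point of the deadlock-freedom functional: clause (i) of \df{\LL} follows because the liveness witness for any $\p \in \activep{\LL}$ supplies an outgoing transition of \LL (either the final action itself when $n=0$ or the first step of the trace when $n>0$), and the closure clause follows because clause (ii) of \live{\LL} propagates liveness to every one-step successor. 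The mathematical content is the same two observations in both cases, but your version has two advantages over the paper's sketch: it makes explicit the greatest-fixed-point reading of the recursively stated predicates \live{\cdot} and \df{\cdot} (which is the only sensible interpretation here, since the example configurations loop forever and an inductive reading would be vacuous), and it pins down the sequential-trace reading of the liveness clause together with the $n=0$ versus $n>0$ case split, both of which the paper leaves implicit. The contrapositive route is marginally shorter once one grants those readings; your direct route is the one you would mechanise.
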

\begin{proof}
\rm
The proof proceeds by contrapositive, observing that a deadlocked process is not live.
\end{proof}

\subsection{Global protocols}

Liveness is a cornerstone property in distributed systems, ensuring safe interaction and progress for all components involved.
The mathematical definition of liveness requires verifying, for all active participants, the existence of valid transition paths and, moreover, recursively checking liveness for all reachable configurations. Algorithmically, a direct implementation of this definition necessitates exploring all possible transition paths of a local configuration. Consequently, this approach results in exponential time complexity relative to the syntactic size of the configuration.

The remainder of this section focuses on methods to ensure or construct live local configurations while avoiding procedures with exponential complexity.

{\em Global protocols} offer an alternative perspective on behavioural types, ensuring properties such as deadlock-freedom and liveness by design.
Formally, the set of global protocols is defined as:
\[
\arraycolsep=4pt
\begin{array}{rcl}
\Glob   &=&     \set{\element \sep \glob \setbar \element \in \calphabet, \glob \in \Glob}
\\      &\cup&  \set{\glob[1] \gor \glob[2] \setbar \glob[1], \glob[2] \in \Glob}
\\      &\cup&  \set{\rec{t} \glob \setbar \glob \in \Glob}
\\      &\cup&  \set{\var{t}, \var{t_1}, \dots}
\\      &\cup&  \set{\ginact}
\end{array}
\]
\begin{enumerate}
\item

A sequential global protocol is a protocol \glob prefixed by a message-passing action $\act{\p}{\q}{\U}$, expressed as $\pass{\p}{\q}{\U} \glob$. This describes participant \p sending a message of type \U to participant \q, after which the interaction proceeds according to protocol \glob.
\item
Global protocol $\glob[1] \gor \glob[2]$ is a choice between protocols $\glob[1]$ and $\glob[2]$. Notation $\Gor{1 \leq i \leq n} \glob[i]$ abbreviates protocol $\glob[1] \gor \dots \gor \glob[n]$, whenever $n > 0$.
\item
Global protocol $\rec{t} \glob$ declares a global protocol \glob within a loop with label \var{t}.
\item
The global protocol without any interaction is defined as \ginact and is often omitted.
\end{enumerate}
%

A global protocol specifies the behaviour of a local configuration by {\em projecting} its constituent roles. Conversely, a local configuration may {\em compose} a global protocol.
The projection/composition relationship is defined axiomatically using derivation trees.
Specifically, given a set of axioms in the form of derivation trees, a derivation tree:
\[
t = \tree{p_1, \dots, p_n}{p}
\]
with $n \leq 0$, is {\em derivable} from the axioms, thus
proposition $p$ is derivable from the axioms, whenever
either the derivation tree  $t$ is itself an axiom or the
propositions $p_1, \dots, p_n$ are derivable from the axioms.
\begin{definition}[Projection/Composition relation]
\label{app:def:composition_relation}
\rm
Relation,
$\proves \subseteq \LSet \times \Glob$,
is
defined as:
\[
\begin{array}{c}
\set{\p[i]: \tinact \setbar i \in I} \proves \ginact
\qquad
\set{\p[i]: \tvar{t} \setbar i \in I} \proves \var{t}
\\[2mm]
\tree{
	\LL, \p: \local, \q: \locald \proves \glob
}{
	\LL, \p: \out{\q}{\U} \local, \q: \inp{\p}{\U} \locald \proves \pass{\p}{\q}{\U} \glob
}
\\[6mm]
\tree{
	\forall \in I,\ \LL, \p: \local[i], \q: \locald[i] \proves \glob[i]
}{
	\LL, \p: \Tor{i \in I} \out{\q}{\U[i]} \local[i], \q: \Tor{i \in I} \inp{\p}{\U[i]} \locald[i] \proves \Gor{i \in I} \pass{\p}{\q}{\U[i]} \glob[i]
}
\\[8mm]
\tree {
	\set{\p[i]: \tvar{t} \setbar i \in I}, \set{\p[j]: \local[j] \setbar j \in J} \proves \glob
	\qquad
	\forall j \in J, \local[j] \not= \tvar{t}
}{
	\set{\p[i]: \tinact \setbar i \in I}, \set{\p[j]: \trec{t} \local[j] \setbar j \in J} \proves \rec{t} \glob
}
\end{array}
\]
\end{definition}

Projection/composition is defined
inductively on the syntax of global
protocols.
\begin{enumerate}
\item
An inactive local configuration composes an inactive global protocol.
\item
Similarly, a local configuration with local loop variables composes a global loop variable.
\item
The rule for composing the message passing action requires a send action by participant \p and a corresponding receive action by participant \q on message type \U.
\item
The rule for handling choice imposes restrictions on choice interaction. In particular, the rule allows composition for global protocols of the form $\Gor{i \in I} \pass{\p}{\q}{\U[i]} \glob[i]$, where participant \p chooses from a set of messages \U[i] to send to a participant \q.
Moreover, the rule requires that all roles besides \p and \q implement the same local protocol in each continuation $\glob[i]$, as shown by requirement for local configuration \LL in condition:
\[
\forall i \in I,\ \LL, \p: \local[i], \q: \locald[i] \proves \glob[i].
\]
\item
Finally, the recursive global type is composed by a local configuration of recursive local types, all prefixed with the same local loop variable.
\end{enumerate}
The restrictions on the choice rule ensure the liveness property for local configurations.
\begin{theorem}
\label{app:thm:global_is_live}
If $\LL \proves \glob$, then \LL is live. 
\end{theorem}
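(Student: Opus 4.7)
The plan is to prove the theorem by structural induction on the derivation of $\LL \proves \glob$, following the four rules of \DefinitionComposition. At each inductive step I establish two clauses simultaneously: (i) every active role in $\LL$ admits a finite path to a transition in which it participates, and (ii) every one-step reduct of $\LL$ again composes some global protocol, so that the inductive hypothesis propagates to all reachable states. A preliminary \emph{subject reduction} lemma of the form ``$\LL \proves \glob$ and $\LL \by[\element] \LLd$ imply $\LLd \proves \globd$ for some $\globd$'' is the workhorse for clause (ii).

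The base case $\set{\p[i]: \tinact \setbar i \in I} \proves \ginact$ is vacuous since $\activep{\LL} = \es$ and no transitions are enabled. For the message-pass rule, the configuration $\LL, \p: \nout{\q}{\U}\local, \q: \ninp{\p}{\U}\locald$ enjoys a transition via $\act{\p}{\q}{\U}$ to the premise configuration $\LL, \p: \local, \q: \locald$, which composes $\glob$ and is therefore live by the inductive hypothesis. Roles \p and \q satisfy clause (i) immediately; any other active $\rr \in \activep{\LL}$ reaches a matching transition by first firing the $\p$-$\q$ synchronisation and then applying the inductive hypothesis on the reduct. The choice rule is handled analogously, exploiting the crucial restriction in \DefinitionComposition that every branch $\glob[i]$ projects onto the same residual protocols for all roles other than \p and \q, which prevents \p from committing a passive role to a particular branch before it can act.

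For the recursive rule, I would first prove a substitution lemma stating that if $\set{\p[i]: \tvar{t}}_{i \in I}, \set{\p[j]: \local[j]}_{j \in J} \proves \glob$ is derivable, then so is $\set{\p[i]: \tinact}_{i \in I}, \set{\p[j]: \local[j]\substt{\trec{t}\local[j]}{\tvar{t}}}_{j \in J} \proves \glob\substt{\rec{t}\glob}{\var{t}}$. This matches exactly the local transition semantics for $\trec{t}\local[j]$, which unfolds by substitution, so liveness of the conclusion follows from the inductive hypothesis applied to the unfolded derivation (using the structural induction measure on $\glob$ rather than raw tree depth, since unfolding preserves the premise's global protocol up to substitution).

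The main obstacle will be the subject reduction lemma underlying clause (ii), specifically for configurations in which roles in the ``passive'' context $\LL$ have enabled synchronisations that are independent of the top-level action of $\glob$. Showing that firing such an independent transition yields a configuration still composing a (suitably reshuffled) global protocol is a confluence-style argument that relies essentially on the uniformity condition enforced by the choice rule and on disjointness of the communicating participants. Combined with the careful handling of recursion unfolding, this diamond property is where the bulk of the technical work lies; once it is in place, clauses (i) and (ii) combine into a coinductive (or well-founded) argument that delivers the theorem.
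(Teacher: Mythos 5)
Your proposal follows essentially the same route as the paper: induction on the derivation of $\LL \proves \glob$, with the base case vacuous, the message-pass and choice cases discharged via the enabled $\p$--$\q$ synchronisation together with the branch-uniformity restriction of the choice rule, and recursion handled by unfolding. The difference is one of explicitness only: the subject-reduction and substitution lemmas you isolate are precisely the steps the paper's proof performs inline (and rather informally) when it treats transitions internal to the passive context $\LL$ and the recursive case, so your plan is, if anything, a more rigorous rendering of the same argument.
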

\begin{proof}
The proof is done by induction on the definition of $\proves$.

\begin{enumerate}[label=$\bullet$]
\item   \textbf{Base Case:}
The base case, 
\[
\set{\p[i]: \tinact \setbar i \in I} \proves \ginact
\]
is straightforward. For local configurations $\LL = \set{\p[i]: \tinact \setbar i \in I}$,
\LL is live.

\item       \textbf{Inductive Hypothesis:} Assume that if 
\[
\LL \proves \glob
\]
then \LL is live. 

\item       \textbf{Inductive Step:}
There are three cases: 
\begin{enumerate}
	\item   The case for the choice global protocol:
	\[
	{
		\scriptstyle
		\tree{
			\forall \in I,\ \LL, \p: \local[i], \q: \locald[i] \proves \glob[i]
		}{
			\scriptstyle
			\LL, \p: \Tor{i \in I} \out{\q}{\U[i]} \local[i], \q: \Tor{i \in I} \inp{\p}{\U[i]} \locald[i] \proves \Gor{i \in I} \pass{\p}{\q}{\U[i]} \glob[i]
		}
	}
	\]
	There are two sub-cases: 
	
	\noindent
	\begin{enumerate}
		\item   Observe that for all $i \in I$ 
		\[
		\begin{array}{cl}
			& \LL, \p: \Tor{i \in I} \out{\q}{\U[i]} \local[i], \q: \Tor{i \in I} \inp{\p}{\U[i]} \locald[i]
			\\ \by[\act{\p}{\q}{\U[i]}] &
			\LL, \p:  \local[i], \q:  \locald[i]
		\end{array}
		\]
		By the induction hypothesis, \LL, \p:  \local[i], \q:  \locald[i] is live.
		
		\item   Observe that
		\[
		\begin{array}{cl}
			&\LL, \p: \Tor{i \in I} \out{\q}{\U[i]} \local[i], \q: \Tor{i \in I} \inp{\p}{\U[i]} \locald[i]
			\\ \by[\act{\rr[1]}{\rr[2]}{\U}] &
			\LLd, \p: \Tor{i \in I} \out{\q}{\U[i]} \local[i], \q: \Tor{i \in I} \inp{\p}{\U[i]} \locald[i]
		\end{array}
		\]
		with $\LL \by[\act{\rr[1]}{\rr[2]}{\U}]  \LLd$.
		By the inductive hypothesis it holds that for all $i \in I$,
		\[
		\LL, \p: \local[i], \q: \locald[i]
		\by[\act{\rr[1]}{\rr[2]}{\U}]
		\LLd, \p: \local[i], \q: \locald[i]
		\]
		implies that $\LLd, \p: \local[i], \q: \locald[i]$ is live.
		From the above results and the definition of liveness, it follows that:
		\[
		\LLd, \p: \Tor{i \in I} \out{\q}{\U[i]} \local[i], \q: \Tor{i \in I} \inp{\p}{\U[i]} \locald[i]
		\] is live.
	\end{enumerate}
	\item   The case for message passing global protocol:
	\[
	\tree{
		\LL, \p: \local, \q: \locald \proves \glob
	}{
		\LL, \p: \out{\q}{\U} \local, \q: \inp{\p}{\U} \locald \proves \pass{\p}{\q}{\U} \glob
	}
	\]
	This is a special case of the choice global protocol rule when the set $I$ is a singleton.
	\item   The case for recursive global protocol:
	\[
	\tree {
		\scriptstyle
		\set{\p[i]: {\color{olive} \mathsf{t}} \setbar i \in I}, \set{\p[j]: \local[j] \setbar j \in J} \proves \glob
		\qquad
		\forall j \in J, \local[j] \not= {\color{olive} \mathsf{t}}
	}{
		\scriptstyle
		\set{\p[i]: \tinact \setbar i \in I}, \set{\p[j]: {\color{olive} \mathsf{loop\;t}}. \local[j] \setbar j \in J} \proves {\color{blue} \mathsf{loop\;t}}. \glob
	}
	\]
	is straightforward. 
	A local configuration:
	\[
	\set{\p[i]: \tinact \setbar i \in I}, \set{\p[j]: \trec{t} \local[j] \setbar j \in J}
	\]
	is semantically equivalent, i.e., has the same transition communication transition, to
	\[
	\set{\p[i]: \tinact \setbar i \in I}, \set{\p[j]: \local[j] \substt{\trec{t} \local[j]}{\tvar{t}} \setbar j \in J}
	\]
	Moreover, global protocol
	\[
	\rec{t} \glob
	\]
	is equivalent to
	$\glob \substt{\rec{t} \glob}{\var{t}}$.
	
	Applying the inductive hypothesis gives
	$
	\set{\p[i]: \tinact \setbar i \in I}, \set{\p[j]: \local[j] \substt{\trec{t} \local[j]}{\tvar{t}} \setbar j \in J} \proves \glob \substt{\rec{t} \glob}{\var{t}}
	$,
	and thus:
	\[
	\set{\p[i]: \tinact \setbar i \in I}, \set{\p[j]: \local[j] \substt{\trec{t} \local[j]}{\tvar{t}} \setbar j \in J}
	\]
	is live. 
\end{enumerate}
\end{enumerate}
The inductive step concludes the proof.
\end{proof}

The composition/projection algorithm has polynomial time complexity relative to the syntactic size of a local configuration.
\begin{theorem}
\label{app:thm:project_is_polynomial}
$\LL \proves \glob \in O(\size[\LL])$.
\end{theorem}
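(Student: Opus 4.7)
The plan is to prove this by induction on $\size[\LL]$, describing a deterministic algorithm that, given $\LL$, constructs the unique $\glob$ such that $\LL \proves \glob$ (or fails if no such $\glob$ exists) in linear time. The first step is to observe that the derivation rules for $\proves$ are syntax-directed: inspecting the top-level constructor of each role's local protocol uniquely determines which rule applies. If every role maps to \tinact or to the same \tvar{t}, the base case applies in a single $O(\size[\LL])$ pass. If every active role has a \trec{t} prefix with the same label, the recursion rule is applied, consuming one unit per role. Otherwise, the algorithm locates a pair of participants $\p,\q$ whose top-level actions are dual, either a matched send/receive (triggering the message-passing rule) or a synchronised external choice (triggering the choice rule).

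For the message-passing rule each application consumes exactly two action nodes (one from $\p$, one from $\q$) of $\LL$ and recurses on a configuration of size $\size[\LL]-2$, which by the inductive hypothesis contributes $O(\size[\LL])$ total work. Matching the dual pair can be performed in $O(1)$ amortised time by indexing roles by participant name. The unfolding rule for \trec{t} similarly consumes one top-level constructor per synchronised role and recurses on the substituted bodies, whose combined size is accounted for by the same induction.

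The main obstacle is the choice rule, whose $|I|$ premises all share the same context $\LL$. A naive recursive algorithm that processes $\LL$ afresh inside each premise would incur $\Theta(|I|\cdot\size[\LL])$ work on the shared context alone and fail to be linear. The resolution rests on the observation that, because $\LL$ is syntactically identical in every premise and the relation is deterministic, the contribution of $\LL$ to each $\glob[i]$ is also identical. Hence the algorithm composes $\LL$ once, obtaining a common global suffix, and pays only $\size[{\local[i]}]+\size[{\locald[i]}]$ additional work per branch to build $\pass{\p}{\q}{\U[i]}\glob[i]$. This amortises the choice rule so that each atomic action of $\LL$ is inspected $O(1)$ times across the entire derivation tree.

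The final step is an amortised accounting: every one of the $\size[\LL]$ action nodes, \trec{t} binders, loop labels, and \tinact leaves is consumed by exactly one rule application, and each such application performs $O(1)$ work under the shared-context optimisation of the choice rule. Summing gives a total running time of $O(\size[\LL])$. The technical crux is justifying the shared-work factorisation for the choice rule, which follows by a small confluence argument on the derivation: determinism of $\proves$ on $\LL$ guarantees that the $\LL$-projection computed once is exactly what each branch would have computed independently.
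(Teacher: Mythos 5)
You have correctly located the one delicate point of the argument: the choice rule, whose $|I|$ premises all share the context $\LL$. Your other cases (base, loop, message passing) follow the same induction over the derivation of $\proves$ that the paper uses. The problem is that your repair of the choice case does not hold. The claimed factorisation --- ``compose $\LL$ once and reuse it as a common global suffix'' --- is unsound for two reasons. First, $\proves$ is not deterministic: independent communications may be serialised in either order, so the same $\LL$ composes many global protocols and there is no ``unique $\glob$'' for your algorithm to construct. Second, and decisively, the roles of $\LL$ may communicate with $\p$ and $\q$ inside the branch-specific continuations, so how $\LL$ is consumed depends on the branch. Concretely, take the shared context $\rr: \ninp{\p}{\U}$ with continuations $\local[1] = \out{\rr}{\U} \nout{\q}{\U[3]}$, $\locald[1] = \ninp{\p}{\U[3]}$ and $\local[2] = \out{\q}{\U[3]} \nout{\rr}{\U}$, $\locald[2] = \ninp{\p}{\U[3]}$. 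Branch $1$ forces $\glob[1] = \pass{\p}{\rr}{\U} \npass{\p}{\q}{\U[3]}$, while branch $2$ forces $\glob[2] = \pass{\p}{\q}{\U[3]} \npass{\p}{\rr}{\U}$: the single action of $\rr$ sits at different positions in the two branches, so there is no common fragment to precompute once, and the ``small confluence argument'' you defer to would have to establish a commutation property that is false here. Worse, $\LL$'s actions are genuinely duplicated in the composed protocol; stacking such choices above a long third-party tail whose first action is a receive from $\p$ makes $\size[\glob]$ grow quadratically in $\size[\LL]$, so no algorithm that writes $\glob$ out explicitly can run in time $O(\size[\LL])$ at all.

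So there is a genuine gap at exactly the step you call the technical crux. For comparison, the paper's own proof is the same induction and stumbles at the same place, but silently: in the choice case it asserts that the size of the conclusion equals $2\,\size[I] + \sum_{i \in I} \size[{\LL, \p: \local[i], \q: \locald[i]}]$, an identity that counts $\size[\LL]$ once on the left and $|I|$ times on the right, i.e.\ it implicitly assumes the very sharing you tried, and failed, to justify. Closing the gap requires more than an implementation remark: one must either bound the total derivation (and output) size and state the complexity relative to it, restrict the claim to checking a given pair $\LL \proves \glob$, use a shared (DAG) representation of $\glob$, or retreat to the polynomial bound announced in the main text. As written, your proposal establishes none of these, so it does not prove the stated $O(\size[\LL])$ bound.
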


\begin{proof}
The proof is done by induction on the definition of $\proves$.

\begin{enumerate}[label=$\bullet$]
\item   \textbf{Base Case:}
There are two cases
\begin{itemize}
	\item
	Base case $\set{\p[i]: \tinact \setbar i \in I} \proves \ginact$ is straightforward. The algorithm requires \size[I] steps, one for each $\p[i]$, to check that the corresponding local protocol is inactive.
	Moreover, 
	\[
	\textstyle
	\size[\set{\p[i]: \tinact \setbar i \in I}] = \sum\limits_{i \in I} \size[\tinact] = \sum\limits_{i \in I} 1 = \size[I].
	\]
	Thus, \[
	\set{\p[i]: \tinact \setbar i \in I} \proves \ginact \in O(\size[\set{\p[i]: \tinact \setbar i \in I}]),
	\]
	as required.
	
	\item
	Base case $\set{\p[i]: \tvar{t} \setbar i \in I} \proves \var{t}$ follows similar argumentation.
\end{itemize}

\item       \textbf{Inductive Hypothesis:} Assume that $\LL \proves \glob \in O(\size[\LL])$.

\item       \textbf{Inductive Step:}
There are three cases: 
\begin{enumerate}
	\item   The case for message passing global protocol:
	\[
	\tree{
		\LL, \p: \local, \q: \locald \proves \glob
	}{
		\LL, \p: \out{\q}{\U} \local, \q: \inp{\p}{\U} \locald \proves \pass{\p}{\q}{\U} \glob
	}
	\]
	The rule takes one step to match the duality of actions \actOut{\q}{\U} and \actInp{\p}{U} and then verifies
	$\LL, \p: \local, \q: \locald \proves \glob$.
	By the inductive hypothesis:
	\[
	\LL, \p: \local, \q: \locald \proves \glob \in O(\size[\LL, \p: \local, \q: \locald]).
	\]
	Thus, it is safe to conclude that:
	\[
	\begin{array}{l}
		\LL, \p: \out{\q}{\U} \local, \q: \inp{\p}{\U} \locald \proves \pass{\p}{\q}{\U} \glob \qquad \qquad \quad
		\\ \hfill \in O(2 + \size[\LL, \p: \local, \q: \locald]).
	\end{array}
	\]
	Furthermore, observe that 
	\[
	\size[\LL, \p: \out{\q}{\U} \local, \q: \inp{\p}{\U} \locald] = 2 + \size[\LL, \p: \local, \q: \locald],
	\]
	concluding with:
	\[
	\begin{array}{l}
		\LL, \p: \out{\q}{\U} \local, \q: \inp{\p}{\U} \locald \proves \pass{\p}{\q}{\U} \glob \qquad \qquad \qquad
		\\ \hfill \in O(\size[\LL, \p: \out{\q}{\U} \local, \q: \inp{\p}{\U} \locald]).
	\end{array}
	\]
	\item   The case for the choice global protocol:
	\[
	{
		\scriptstyle
		\tree{
			\forall \in I,\ \LL, \p: \local[i], \q: \locald[i] \proves \glob[i]
		}{
			\scriptstyle
			\LL, \p: \Tor{i \in I} \out{\q}{\U[i]} \local[i], \q: \Tor{i \in I} \inp{\p}{\U[i]} \locald[i] \proves \Gor{i \in I} \pass{\p}{\q}{\U[i]} \glob[i]
		}
	}
	\]
	The rule matches the duality of actions \actOut{\q}{\U[i]} and \actInp{\p}{\U[i]}, requiring \size[I] steps, and then checks
	$\LL, \p: \local[i], \q: \locald[i] \proves \glob[i]$, for all $i \in I$.
	From the inductive hypothesis
	\[
	\forall i \in I,\ \LL, \p: \local[i], \q: \locald[i] \proves \glob[i] \in O(\size[{\LL, \p: \local[i], \q: \locald[i]}])
	\]
	Adding the steps together gives:
	\[
	\begin{array}{l}
		\size[{\LL, \p: \Tor{i \in I} \out{\q}{\U[i]} \local[i], \q: \Tor{i \in I} \inp{\p}{\U[i]} \locald[i]}] \qquad \qquad \qquad \qquad
		\\ \hfill 
		\begin{array}{l}
			= 2*\size[I] + \sum\limits_{i \in I} \size[{\LL, \p: \local[i], \q: \locald[i]}]
			\\
			= \sum\limits_{i \in I} (2 + \size[{\LL, \p: \local[i], \q: \locald[i]}]),
		\end{array}
	\end{array}
	\]
	which concludes with:
	\[
	\begin{array}{cl}
		\scriptstyle \LL, \p: \Tor{i \in I} \out{\q}{\U[i]} \local[i], \q: \Tor{i \in I} \inp{\p}{\U[i]} \locald[i] \proves \Gor{i \in I} \pass{\p}{\q}{\U[i]} \glob[i] \qquad \qquad 
		\\
		\hfill \in\ \scriptstyle O(\size[{\LL, \p: \Tor{i \in I} \out{\q}{\U[i]} \local[i], \q: \Tor{i \in I} \inp{\p}{\U[i]} \locald[i]}]).
	\end{array}
	\]
	
	\item
	The case for recursive global protocol:
	\[
	\tree {
		\scriptstyle
		\set{\p[i]: {\color{olive} \mathsf{t}} \setbar i \in I}, \set{\p[j]: \local[j] \setbar j \in J} \proves \glob
		\qquad
		\forall j \in J, \local[j] \not= {\color{olive} \mathsf{t}}
	}{
		\scriptstyle
		\set{\p[i]: \tinact \setbar i \in I}, \set{\p[j]: {\color{olive} \mathsf{loop\;t}}. \local[j] \setbar j \in J} \proves {\color{blue} \mathsf{loop\;t}}. \glob
	}
	\]
	is straightforward. The rule proceeds by checking 
	\[
	\set{\p[i]: \tvar{t} \setbar i \in I}, \set{\p[j]: \local[j] \setbar j \in J} \proves \glob.
	\]
	The inductive hypothesis ensures,
	\[
	\begin{array}{l}
		\set{\p[i]: \tvar{t} \setbar i \in I}, \set{\p[j]: \local[j] \setbar j \in J} \proves \glob \qquad \qquad \qquad
		\\
		\hfill \in O(\size[\set{\p[i]: \tvar{t} \setbar i \in I}, \set{\p[j]: \local[j] \setbar j \in J}]).
	\end{array}
	\]
	Moreover,
	\[
	\begin{array}{cl}
		\size[\set{\p[i]: \tinact \setbar i \in I}, \set{\p[j]: \trec{t} \local[j] \setbar j \in J}] \qquad \qquad
		\\
		\hfill = \size[\set{\p[i]: \tvar{t} \setbar i \in I}, \set{\p[j]: \local[j] \setbar j \in J}],
	\end{array}
	\]
	concluding that 
	\[
	\begin{array}{l}
		\scriptstyle \set{\p[i]: \tinact \setbar i \in I}, \set{\p[j]: {\color{olive} \mathsf{loop\;t}}. \local[j] \setbar j \in J} \proves {\color{blue} \mathsf{loop\;t}}. \glob \qquad \qquad \qquad 
		\\
		\hfill
		\in\ \scriptstyle O(\size[\set{\p[i]: \tinact \setbar i \in I}, \set{\p[j]: {\color{olive} \mathsf{loop\;t}}. \local[j] \setbar j \in J} \proves {\color{blue} \mathsf{loop\;t}}. \glob]).
	\end{array}
	\]
\end{enumerate}
\end{enumerate}
The inductive step concludes the proof.
\end{proof}

The following example presents a global protocol for the \WDNsimple use case.

\begin{example} 
\rm
\label{app:ex:globsimple}
Consider global protocol:
\[
\arraycolsep=3pt
\begin{array}{rcll}
\globsimple     &=& 
\rec{t} & \pass{\flowSensor[1]}{\estpp}{\flowt}
\\ &&&              \pass{\flowSensor[2]}{\estpp}{\flowt}
\\ &&&              \pass{\estpp}{\controller}{\headt} 
\\ &&&              (\pass{\controller}{\pumpp}{\ON}\var{t} \tor \pass{\controller}{\pumpp}{\OFF} \var{t})
\end{array}
\]
The example describes the control loop with tank water level estimator in \figMotivatingExample 
from a global point of view. The control loop begins with sensors \flowSensor[1] and \flowSensor[2] sending flow value to the estimator \estpp. The estimator then computes a water level as a \head value and sends it to the controller, \controller. Finally, the controller interacts with the pump, \pumpp, sending a signal to turn the pump \ON or \OFF.

It is easy to verify that $\LLsimple \proves \globsimple$,
where \LLsimple is the local configuration in Example~\ref{app:ex:local_configuration_running_example}, and thus verify that \LLsimple is live following Theorem~\ref{app:thm:global_is_live}.
\qed
\end{example}

\section{An ontology for the Water Distribution Network Domain}
\label{app:sec:wdn_ontology}
An ontology schema for the Water Distribution Networks (WDN) domain is defined by the structure:
\[
\WDN = \lrangle{\Properties, \emodel, \Classes, \translation}.
\]
The set of properties, $\Properties = \set{\flow, \head}$, defines the flow and the head properties. 
The set of estimators, $\emodel = \set{\tankmass, \junctionmass, \energy}$, defines the tank mass estimator, \tankmass, and the junction mass estimator, \junctionmass, adhering to the law of mass preservation, and the energy preservation estimator, \energy, adhering to the law of energy preservation.

The set of industrial component classes is defined as:
\[
\begin{array}{rcll}
\Classes &=&    \set{
&   \lrangle{r, \set{\head}},
\\ & &          &   \lrangle{j, \set{\flow, \head}, \junctionmass},
\\ & &          &   \lrangle{t, \set{\shape, \head}, \tankmass},
\\ & &          &   \lrangle{p, \set{\shape, \flow}, \energy},
\\ & &          &   \lrangle{u, \set{\shape, \flow}, \energy},
\\ & &          &   \lrangle{\class[\mathsf{flow}], \set{\flow}},
\\ & &          &   \lrangle{\class[\mathsf{head}], \set{\head}},
\\ & &          }
\end{array}
\]
The first five classes correspond to the following physical components: reservoirs, junctions, tanks, pipes, and pumps.
The last two classes represent flow sensing points and head sensing points.
The pump class is the sole member of the set of actuator classes.
Additionally, reservoirs, junctions, and tanks are categorised as physical node classes, while pipes and pumps are classified as physical link classes.

A knowledge graph for a water distribution network process, \wdnprocess, is defined as:
\[
\wdnprocess = \lrangle{\GG, \lrangle{\Hardware, \h}, \lrangle{\AG, \modelmap}}
\]
Graph $\GG = \lrangle{\VV, \EE}$, is called {\em water distribution network graph}.
Following the definition of industrial component classes, set $\VV$, with $v_1, v_2, \dots \in \VV$, is partitioned into:
i) reservoirs, $\VV[r]$;
ii) junctions, $\VV[j]$;
iii) tanks, $\VV[t]$;
iv) pipes, $\VV[p]$;
v) pumps, $\VV[u]$;
vii) flow sensing points, \Sensors[\mathsf{flow}]; and 
viii) head sensing points, \Sensors[\mathsf{head}].
%
%
%
Structure \lrangle{\Hardware, \h} follows the general definition of industrial process ontologies.

\begin{figure}

    \begin{tikzpicture}[scale=0.9, transform shape]
        \reservoirGraph{0.5}{1}{Reservoir}{\reservoir}
        \junctionGraph{2.5}{1}{Junction}{\junction}
        \tankGraph{4.5}{1}{Tank}{\tank}

        \pipeGraph{6.5}{1}{Pipe}{\link}
        \node[head, pstate]        at  (6, 1.25) (ipr) {};
        \node[head, pstate]        at  (8, 1.25) (opr) {};
    
        \draw                       (ipr) edge[post] (agPipe)
                                    (opr) edge[post] (agPipe);
    
        \node[flow, pstate]         at (0, 0.5) (pr) {};
        \node[right=1mm]            at (pr) {\mysf{flow\ state}};

        \node[tmass, estimator]     at (2.75, 0.5) (flowPres) {};
        \node[right=1mm]            at (flowPres) {\mysf{tmass\ estimator}};
    
        \node[jmass, estimator]     at (6, 0.5) (dem)  {};
        \node[right=1mm]            at (dem) {\mysf{jmass\ estimator}};

        \node[head, pstate]         at (0, 0) (fl) {};
        \node[right=1mm]            at (fl) {\mysf{head\ state}};

        \node[energy, estimator]    at (2.75, 0) (fl) {};
        \node[right=1mm]            at (fl) {\mysf{energy\ estimator}};

       \node[constant, pstate]
                                    at (6, 0) (shape) {};
       \node[right=1mm]             at (shape) {\mysf{shape\ coef}};
    \end{tikzpicture}

    \caption{Analytical redundancy subgraphs for the Water Distribution Network ontology.
    Pentagon shapes represent state nodes, and square shapes represent estimator nodes. 
    \label{fig:components}}
\end{figure}

The state estimation translation function:
i) translates each physical component, $v \in \VV$, into a state estimation subgraph expressing state estimation within the component class;
ii) uses the information from industrial process graph edges $(v_1, v_2) \in \EE$, to interlink these subgraphs, forming a comprehensive state estimation graph.

The state estimation translation function operates as follows:
It maps each physical component $v \in \VV$ into a state estimation subgraph. This subgraph represents state estimation within the corresponding component class. Information from the industrial process graph edges $(v_1, v_2) \in \EE$ interlinks these subgraphs, creating a comprehensive state estimation graph. Fig.~\ref{fig:components} shows the physical component classes with their attributes and the translation of each class in the water distribution network into its internal subgraph.

A reservoir has a hydraulic \head state.
A junction has a \head state and a \flow state, denoting the junction demand. It includes a junction mass estimator, \junctionmass, that enforces mass preservation (the equivalence of inflow and outflow).
A tank has a \shape state and a \head state. The tank mass estimator, \tankmass, adheres to the mass preservation principle. The initial condition, along with the tank's inflow and outflow, determines water storage and, consequently, the tank's \head state.
Link nodes have a \shape state and a \flow state. An energy estimator, \energy, models the preservation of flow and head across the link. It uses the head state difference at the link's edges as input and determines the flow state within the link.

Given a water distribution network graph, $\GG = \lrangle{\VV, \EE}$, the state estimation translation function for the \WDN domain is defined as
$
\translation[\lrangle{\VV, \EE}] = \lrangle{\eV, \eE}
$
where
\[
\arraycolsep=2pt
\begin{array}{rcl}
\eV  &=&    \Sensors
\\  &\cup&  \set{v.\head \setbar v \in \VV[r]}
\\  &\cup&  \set{v.\head, v.\flow, v.\junctionmass \setbar v \in \VV[j]}
\\  &\cup&  \set{v.\shape, v.\head, v.\tankmass \setbar v \in \VV[t]}
\\  &\cup&  \set{v.\shape, v.\flow, v.\energy \setbar v \in \VV[l]}
\\[2mm]
\eE &=&     \set{(v.\shape, v.\tankmass) \setbar v \in \VV[t]}
\\ &\cup&   \set{(v.\shape, v.\energy) \setbar v \in \VV[l]}
\\ &\cup&   \set{(v_{l}.\flow, v_{c}.\mass), (v_{c}.\mass, v_{l}.\flow),

\\ && \quad (v_{c}.\head, v_{l}.\energy) \setbar (v_{l}, v_{c}) \in \EE \vee (v_{c}, v_{l}) \in \EE}
\\  &\cup&  \set{(s_p, v.\head) \setbar (v, s_p) \in \EE}
\\  &\cup&  \set{(s_f, v_e.\flow) \setbar (v, s_f) \in \EE}
\end{array}
\]
The connections in the state estimation graph define the dependencies among states and the estimation functions, as well as the states measured at sensing points.

The agent repository consists of a set of roles and the agent mapping. Agent roles are defined by behaviours that process input states (measured or estimated) and output states (measured or estimated) according to the estimation model and sensing points.
Agent roles are defined as:
$\AG = \AG[s] \cup \AG[g] \cup \AG[a]$, where
\[
\arraycolsep=2pt
\begin{array}{rcl}
\AG[s] &=& \levelSensor: \localhead,\ \flowSensor: \localflow
\\
\AG[g] &=& \junctionpp: \localjun,\ \linkpp: \locallink,\ \estpp: \localest, , 
\\
\AG[a] &=& \pumpp: \trec{t} (\inp{\producerp}{\ON} \tvar{t} \tor \inp{\producerp}{\OFF} \tvar{t})
\end{array}
\]
with    
\[
\arraycolsep=2pt
\begin{array}{rcl}
\localhead &=& \trec{t} \out{\consumerp}{\headt} \tvar{t},
\\[1mm]
\localflow &=& \trec{t} \out{\consumerp}{\flowt} \tvar{t},
\\[1mm]
\localjun &=&
\trec{t} \inp{\producerp[1]}{\flowt} \inp{\producerp[2]}{\flowt} \out{\consumerp}{\flowt} \tvar{t},
\\[1mm]
\locallink &=&
\trec{t} \inp{\producerp[1]}{\headt} \inp{\producerp[2]}{\headt} \out{\consumerp}{\flowt} \tvar{t},
\\[1mm]
\localest &=&
\trec{t} \inp{\producerp[1]}{\flowt} \inp{\producerp[2]}{\flowt} \out{\consumerp}{\headt} \tvar{t},
\\[1mm]
\localpump &=& \trec{t} (\inp{\producerp}{\ON} \tvar{t} \tor \inp{\producerp}{\OFF} \tvar{t})
\end{array}
\]
There are three estimator roles.
Role $\junctionpp:\localjun$ receives inflow and demand from \producerp[1] and \producerp[2], respectively, and sends the outflow estimation to \consumerp.
Role $\estpp:\localest$ receives inflow and outflow from \producerp[1] and \producerp[2], respectively, and sends tank head estimations to \consumerp.
Role $\linkpp:\locallink$ receives head measurements at the edges of a link from \producerp[1] and \producerp[2], respectively, and sends link flow estimations to \consumerp.
Note that mass estimators handle cases with two flow inputs. This specification remains general, as defining multiple estimators can handle systems with more flow inputs.
Additionally, there are two sensor roles:
Role $\flowSensor:\localflow$ measures and outputs flow states.
Role $\levelSensor:\localhead$ measures and outputs head states.
Finally, the actuator role $\pumpp:\localpump$ describes pump interaction where a pump can receive a binary signal to turn on or off the pump.

The agent mapping is defined as:
\[
\begin{array}{rcll}
\modelmap &=&   \set{ &
\junctionpp: \localjun \maparrow \junctionmass,
\\ && &             \estpp: \localest \maparrow \tankmass,          
\\ && &             \linkpp: \locallink \maparrow \energy
\\ && &             \flowSensor: \localflow \maparrow \flow,
\\ &&&              \levelSensor \localhead \maparrow \head,
\\ && &             \pumpp: \localpump \maparrow u
\hfill            }
\end{array}
\]
which maps:
estimator agents ($\junctionpp: \localjun$, $\estpp: \localest$, $\linkpp: \locallink$) to the estimator functions $\junctionmass$, $\tankmass$, and $\energy$, respectively;
sensor agents ($\flowSensor: \localflow$, $\levelSensor: \localhead$) to the properties $\flow$ and $\head$, respectively;
and actuator agent ($\pumpp: \localpump$) to the pump node $u$;
Estimator roles do not process input for \shape states, as agents can be preconfigured with the physical shape of their components before deployment.

\begin{figure}
\begin{tabular}{c}
   \begin{tikzpicture}[x=10mm, y=10mm]
        \begin{scope}[scale=0.6]
            \node[junction]     at  (0, 2.5) (j1) {};
            \node[above]        at  (j1) {$j_1$};
        
            \node[pump]         at  (2, 2.5) (pmp) {}
                                    edge[pre] (j1);
            \node[above]        at  (pmp) {\pump};
        
            \node[junction]     at  (4, 2.5) (j2) {}
                                    edge[pre] (pmp);
            \node[above]        at  (j2) {$j_2$};
        
            \node[pipe]         at  (6, 2.5) (p1) {}
                                    edge[pre] (j2);
            \node[above]        at  (p1) {$p_1$};
        
            \node[tank]         at  (8, 2.5) (tank) {}
                                    edge[pre] (p1);
            \node[above]        at  (tank) {$\tank$};
        
            \node[pipe]         at  (10, 2.5) (p2) {}
                                    edge[pre] (tank);
            \node[above]        at  (p2) {$p_2$};
        
            \node[junction]     at  (12, 2.5) (j3) {}
                                    edge[pre] (p2);
            \node[above]        at  (j3) {$j_3$};
                                        
        
            \node[spoint, head]
                                    at  (0, 1.5) (s1) {}
                                        edge[pre] (j1);
            \node[right]             at  (s1) {$s_1$};
        
            \node[spoint, flow]
                                    at  (2, 1.5) (s2) {}
                                        edge[pre] (pmp);
            \node[right]            at  (s2) {$s_2$};
        
            \node[spoint, head]
                                    at  (4, 1.5) (s3) {}
                                        edge[pre] (j2);
            \node[left]             at  (s3) {$s_3$};
        
            \node[spoint, flow]
                                    at  (4, 0.5) (s4) {}
                                        edge[pre, bend right] (j2);
            \node[right]            at  (s4) {$s_4$};
            
            \node[spoint, flow]
                                    at  (6, 1.5) (s5) {}
                                        edge[pre] (p1);
            \node[right]            at  (s5) {$s_5$};
        
            \node[spoint, head]
                                    at  (8, 1.5) (s6) {}
                                        edge[pre] (tank);
            \node[right]            at  (s6) {$s_6$};
        
            \node[spoint, flow]
                                    at  (10, 1.5) (s7) {}
                                        edge[pre] (p2);
            \node[right]            at  (s7) {$s_7$};
        
            \node[spoint, flow]
                                    at  (12, 1.5) (s8) {}
                                        edge[pre] (j3);
            \node[right]            at  (s8) {$s_8$};
        
        
            \node[plc,minimum width=0.75cm, minimum height=0.75cm]  at  (2, 0) (plc1) {}
                                        edge[pre]   (s1)
                                        edge[pre]   (s2)
                                        edge[pre]   (s3)
                                        edge[pre]   (s4)
                                        edge[pre, bend left]  (pmp);
            \node                 at    (plc1) {$\dev[1]$};
            
            \node[plc,minimum width=0.75cm, minimum height=0.75cm]  at  (8, 0) (plc2) {}
                                        edge[pre] (s5)
                                        edge[pre] (s6)
                                        edge[pre] (s7);
            \node                 at    (plc2) {$\dev[2]$};
        
            \node[plc,minimum width=0.75cm, minimum height=0.75cm]  at  (12, 0) (plc3) {}
                                        edge[pre] (s8);
            \node                 at    (plc3) {$\dev[3]$};

            \draw[dotted, <->]          (plc1) edge (plc2);
            \draw[dotted, <->]          (plc2) edge (plc3);
        
        \end{scope}
\end{tikzpicture}
\\[2mm]
\begin{tikzpicture}[x=10mm, y=10mm]
        \begin{scope}[scale=0.6]
            \junctionGraph{-0.5}{0}{j1}{}
            \pumpGraph{1.5}{0}{pump}{}
        
            \draw           (agpump) edge[pre] (prj1)
                            (flj1) edge[<->] (flpump);
        
            \junctionGraph{3.5}{0}{j2}{}
        
            \draw           (agpump) edge[pre] (prj2)
                            (flpump) edge[<->] (flj2);
        
            \pipeGraph{5.5}{0}{p1}{} 
        
            \draw           (flj2) edge[<->] (flp1)
                            (prj2) edge[post] (agp1);
        
            \tankGraph{7.5}{0}{tank}{} 
        
            \draw           (agtank) edge[<->] (flp1)
                            (prtank) edge[post] (agp1);
            
            \pipeGraph{9.5}{0}{p2}{} 
            \draw           (agtank) edge[<->] (flp2)
                            (prtank) edge[post] (agp2);
            
            \junctionGraph{11.5}{0}{j3}{} 
            \draw           (flp2) edge[<->] (flj3)
                            (agp2) edge[pre] (prj3);
                            
        
            \node[spoint, head]
                            at  (0, 3) (s1) {}
                                edge[post, bend right] (prj1);
            \node[right]    at  (s1) {$s_1$};
        
            \node[spoint, flow]
                            at  (2, 3) (s2) {}
                                edge[post] (flpump);
            \node[right]    at  (s2) {$s_2$};
        
            \node[spoint, head]
                            at  (4, 3) (s3) {}
                                edge[post, bend right] (prj2);
            \node[left]     at  (s3) {$s_3$};
        
            \node[spoint, flow]
                            at  (4.75, 2.5) (s4) {}
                                edge[post, bend left] (demj2);
            \node[right]    at  (s4) {$s_4$};
            
            \node[spoint, flow]
                            at  (6, 3) (s5) {}
                                edge[post] (flp1);
            \node[right]    at  (s5) {$s_5$};
        
            \node[spoint, head]
                            at  (8, 3) (s6) {}
                                edge[post, bend left] (prtank);
            \node[right]    at  (s6) {$s_6$};
        
            \node[spoint, flow]
                            at  (10, 3) (s7) {}
                                edge[post] (flp2);
            \node[left]     at  (s7) {$s_7$};
        
            \node[spoint, flow]
                            at  (12, 3) (s8) {}
                                edge[post, bend left] (demj3);
            \node[right]    at  (s8) {$s_8$};
        
            
            %
            %
            %
            %
            %
            %
        \end{scope}
\end{tikzpicture}
\\[2mm]
\input{tikz/WDN_control_configuration}
\end{tabular}
\caption{
Top: Graphical representation of the water distribution network, \WDNsimple,
of the running example in \figMotivatingExample. 
Middle: Graphical representation of the
analytical redundancy graph of \WDNsimple.
Bottom: Graphical representation of
seven analytical redundancy trees,
$\treesimple[1] \cup \dots \cup \treesimple[7]$
(shape coefficients are not depicted) rooted at $\tank.\head$,
with a hardware device assignment for each node. 
\label{app:fig:WDN_IP}
}
\end{figure}

\begin{figure*}
\begin{center}
\begin{tabular}{c}
	\fbox{\includegraphics[scale=0.5]{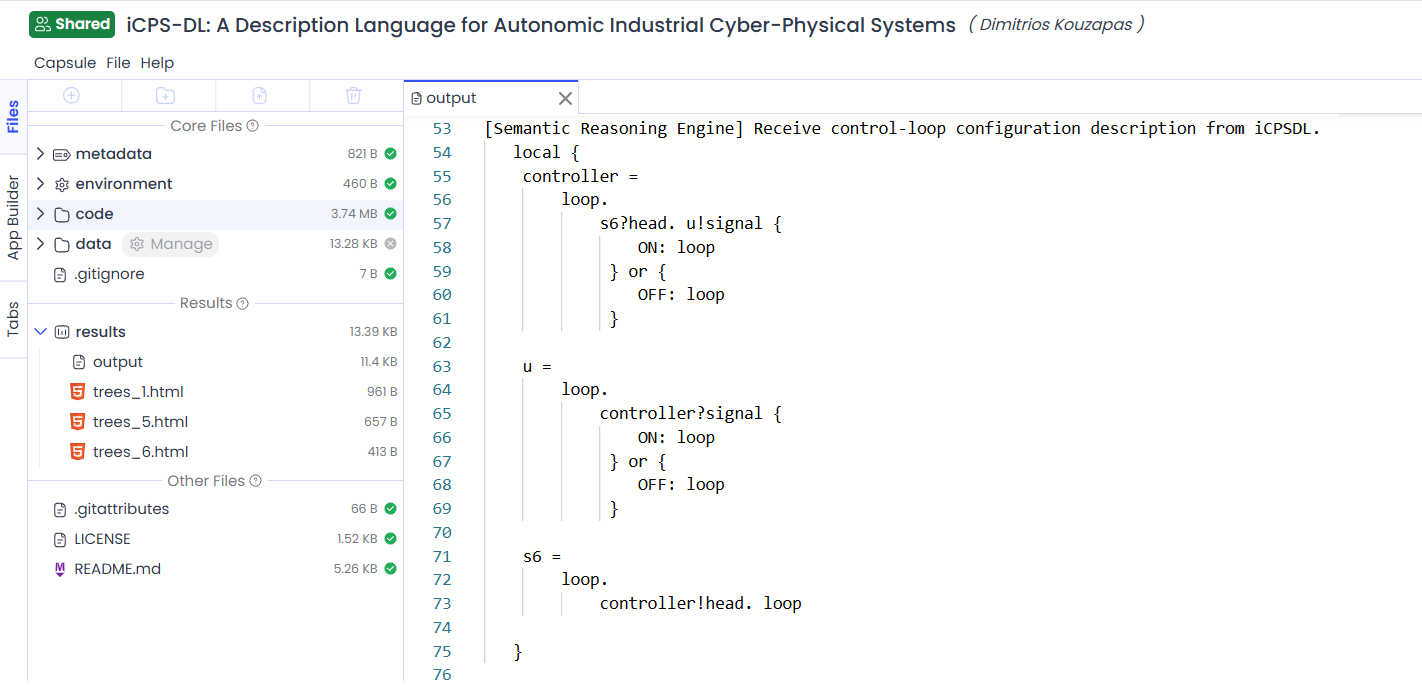}}
	\\[4mm]
	\fbox{\includegraphics[scale=0.5]{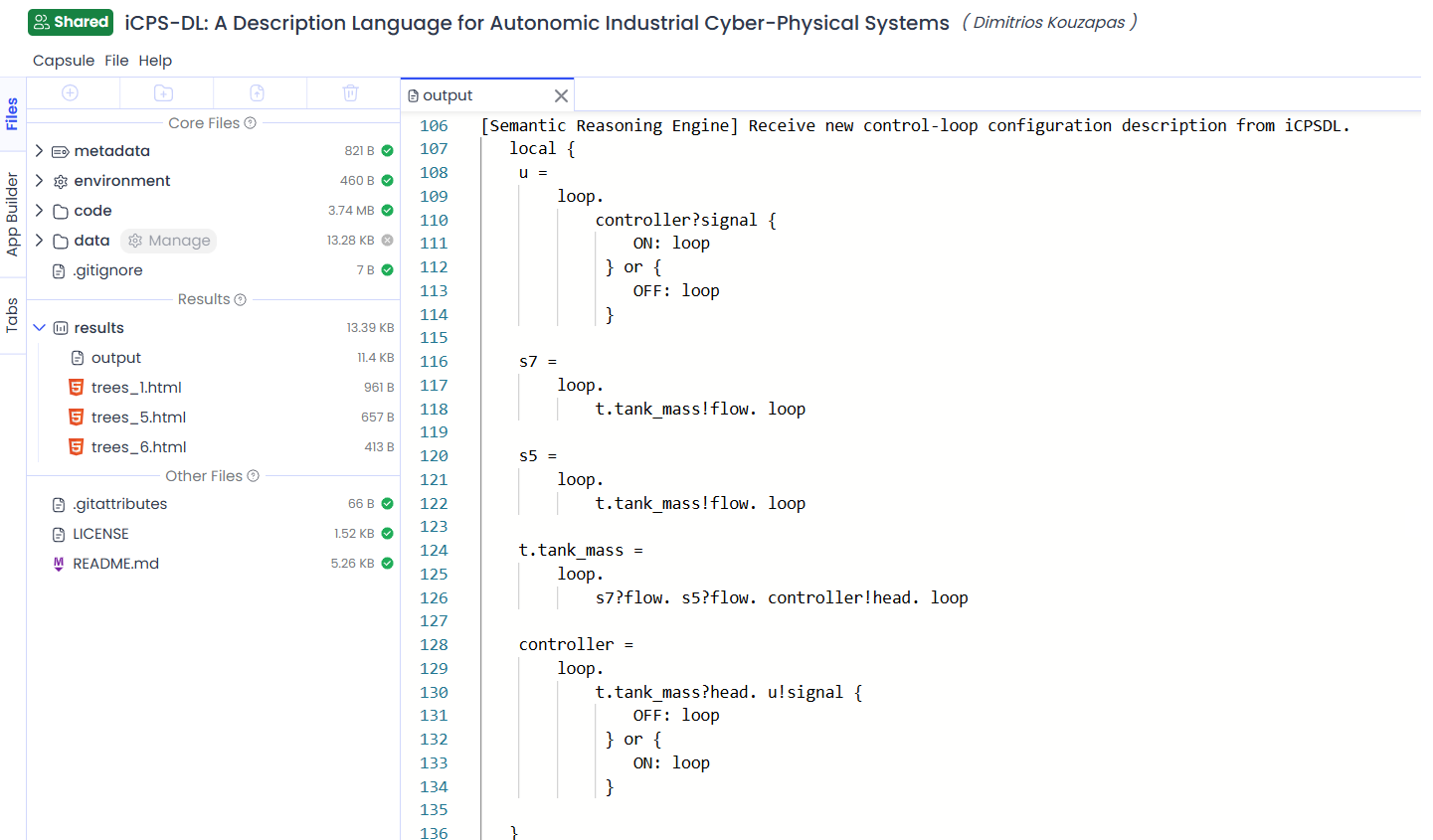}}
\end{tabular}
\end{center}
\caption{
Screenshot results of the proof-of-concept autonomic supervisor when running the CodeOcean module.
Top: Initial configuration of the running example control loop. 
Bottom: Reconfiguration of the running example control loop, after failure of sensor \sensor[6].
\label{app:fig:CodeOcean_results}
}
\end{figure*}

\begin{figure}
\begin{center}
\begin{tabular}{c}
	\fbox{\includegraphics[scale=0.4]{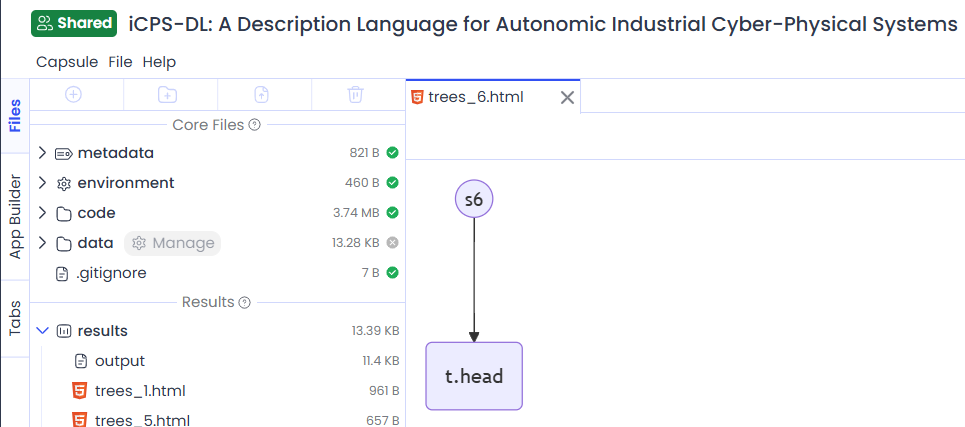}}
	\\[4mm]
	\fbox{\includegraphics[scale=0.4]{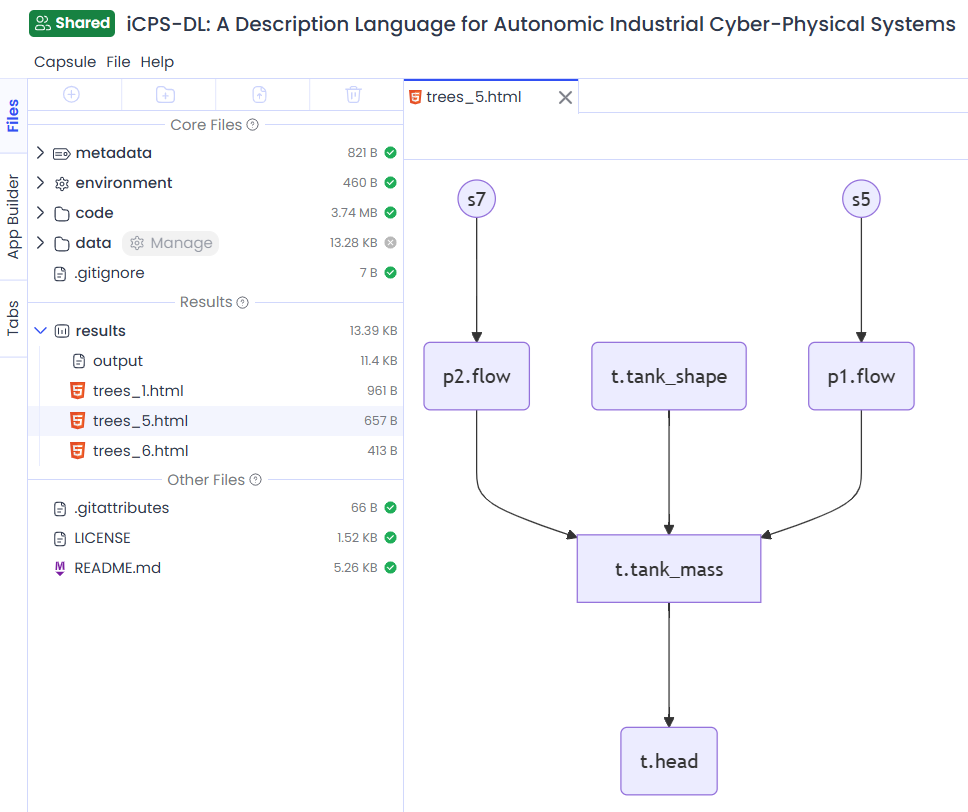}}
\end{tabular}
\end{center}
\caption{
Mermaid diagrams created by autonomic supervisor when running the CodeOcean module.
Top: Mermaid diagram of state estimator tree \treesimple[1]. 
Bottom: Mermaid diagram of state estimator tree \treesimple[2].
\label{app:fig:CodeOcean_mermaid}
}
\end{figure}

%
The next example demonstrates how the autonomic supervisor uses the \iCPSDL functionalities to monitor the running example in \secMotivatingExample.
The CodeOcean module\footnote{\url{https://codeocean.com/capsule/1441773/tree/}} also provides a proof-of-concept implementation of the autonomic supervisor monitoring a simulation of the running example.
\begin{example} [An application in WDN]
\label{ex:running_example:KB}
\rm
Fig.~\ref{app:fig:WDN_IP} (top) graphically depicts the
water distribution network of the running example in \secMotivatingExample, 
extended to include additional sensing points for physical state measurements across the network. The formal definition of the network is as:
\begin{equation*}
\WDNsimple = \lrangle{\GG, \lrangle{\Hardware, \h}, \lrangle{\AG, \modelmap}}
\end{equation*}
with $\GG = \lrangle{\VV, \EE}$.
Set, \VV, is partitioned into junctions $\VV[j] = \set{j_1, j_2, j_3}$, pipes $\VV[p] = \set{p_1, p_2}$, pumps $\VV[u] = \set{\pump}$, $\VV[t] = \set{\tank}$;  head sensing points $\Sensors[\mathsf{head}]= \set{s_1, s_3, s_6}$, and flow sensing points $\Sensors[\mathsf{flow}] = \set{s_2, s_4, s_5, s_7, s_8}$.
There are also three  cyber-physical devices, $\Hardware = \set{\dev[1], \dev[2], \dev[3]}$.

The knowledge base module of the autonomic supervisor stores the semantic description water distribution network knowledge graph, together with the \WDN domain and the agent repository.

The semantic reasoning engine can translate \WDNsimple to a state estimation graph,
$
\egraph = \translation[\GG] 
$, which is depicted in Fig.~\ref{app:fig:WDN_IP}~(middle).
The state estimation graph contains information in the form of estimator trees, each rooted at a state node. These trees dictate how cyber-physical agents can be composed to measure or estimate their root state.

For example, the reasoning engine can identify the estimation trees for estimating or measuring the $\tank.\head$ state. 
Fig.~\ref{app:fig:WDN_IP} (bottom) presents
an overlay of seven state estimator trees,
$
\set{\treesimple[1], \dots, \treesimple[7]}
$, where each tree specifies a configuration for estimating or measuring the $\tank.\head$ state. The colours blue and red highlight the estimator trees, $\treesimple[1]$ and $\treesimple[2]$, respectively, corresponding to the two control schemes described in the running example in \secMotivatingExample. The diagram also maps nodes within the trees to specific hardware devices.

A state estimator tree information enables reasoning on control scheme interactions.
For instance, consider a local configuration involving the roles of a level sensor, a controller, and a pump:
\[
\arraycolsep=2pt
\begin{array}{rcl}
\LLsimple[1]
&=& \levelSensor:   \trec{t} \out{\controller}{\headt} \tvar{t},
\\
& & \pumpp:         \trec{t} (\inp{\controller}{\ON} \tvar{t} \tor \inp{\controller}{\OFF} \tvar{t}),
\\
& & \controller:    \trec{t} \inp{\levelSensor}{\headt} (\out{\pumpp}{\ON} \tvar{t} \tor \out{\pumpp}{\OFF} \tvar{t})
\end{array}
\]
Using the agent repository, the semantic reasoning engine can construct \LLsimple[1] as an implementation of estimator tree, \treesimple[1], by mapping \sensor[6] to the level sensor role, \levelSensor. Also, it holds that $\LLsimple[1] \proves \globsimple[1]$, confirming that \LLsimple[1] is live, where:
\[
\globsimple[1] =
\rec{t} \pass{\levelSensor}{\controller}{\headt} (\pass{\controller}{\pumpp}{\ON} \var{t} \gor \pass{\controller}{\pumpp}{\OFF} \var{t})
\]

Consider the case where the event manager detects a failure in sensor \sensor[6]. The event manager updates the knowledge base by removing sensor \sensor[6]. It will then trigger the semantic reasoning to reconfigure the control loop.

Figure~\ref{app:fig:CodeOcean_results} (top) shows the CodeOcean simulation result, which generates an \iCPSDL description of \LLsimple[1]. Additionally, Figure~\ref{app:fig:CodeOcean_mermaid} (top) presents the Mermaid diagram for the state estimator tree \treesimple[1], also produced by the CodeOcean simulation. 

The semantic reasoning engine then reconstructs the state estimation graph and identifies six state estimation trees for estimating the state $\tank.\head$. 
Recall local configuration \LLsimple from Example~\ref{app:ex:local_configuration_running_example}, and
global protocol \globsimple from Example~\ref{app:ex:globsimple}.
\[
\arraycolsep=2pt
\begin{array}{rcrl}
\LLsimple   &=&     \estpp: & \trec{t} \inp{\flowSensor[1]}{\flowt} \inp{\flowSensor[2]}{\flowt} \out{\controller}{\headt} \tvar{t},
\\ &&               \flowSensor[1]: & \trec{t} \out{\estpp}{\flowt} \tvar{t},
\\ &&               \flowSensor[2]: & \trec{t} \out{\estpp}{\flowt} \tvar{t},
\\ &&               \pumpp: & \trec{t} (\inp{\controller}{\ON} \tvar{t} \tor  \inp{\controller}{\OFF} \tvar{t}),
\\ &&               \controller: & \trec{t} \inp{\estpp}{\headt} (\out{\pumpp}{\ON} \tvar{t} \tor \out{\pumpp}{\OFF} \tvar{t})
\\[2mm]
\globsimple     &=& &
\rec{t} \pass{\flowSensor[1]}{\estpp}{\flowt}
\\ & & &             \ \qquad \ \pass{\flowSensor[2]}{\estpp}{\flowt}
\\ & & &             \ \qquad \ \pass{\estpp}{\controller}{\headt} 
\\ & & &             \ \qquad \ (\pass{\controller}{\pumpp}{\ON}\var{t} \tor \pass{\controller}{\pumpp}{\OFF} \var{t})
\end{array}
\]
The semantic reasoning engine can construct \LLsimple as an implementation of estimator \treesimple[2] by associating 
\sensor[5] and \sensor[7] with the two flow sensor roles, \flowSensor[1] and \flowSensor[2], and the estimator node $\tank.\tmass$ with estimator role $\est:\localest$.
Recall also that \LLsimple is live, since $\LLsimple \proves \globsimple$.

Figure~\ref{app:fig:CodeOcean_results} (bottom) shows the CodeOcean simulation result, which generates an \iCPSDL description of \LLsimple. Additionally, Figure~\ref{app:fig:CodeOcean_mermaid} (bottom) presents the Mermaid diagram for the state estimator tree \treesimple[2], also produced by the CodeOcean simulation. 
\qed
\end{example}

\section{Full Definition of the Water Distribution Network Industrial Domain}
\label{app:sec:wdn_domain_definition}

The Water Distribution Network domain is defined by the following \iCPSDL code:
\begin{lstlisting}[language=iCPSDL]
domain {
# properties
property flow, head, tank_shape, link_shape,
signal {ON, OFF}

# estimation model
model tank_mass, junction_mass, demand_mass,
link_energy

# physical components classes
physical junction(head, flow, junction_mass):
flow -> junction_mass,
junction_mass -> flow

physical demand(head, flow, demand_mass):
flow -> demand_mass,
demand_mass -> flow

physical pipe(link_shape, flow, link_energy):
link_shape -> link_energy,
link_energy -> flow

physical tank (tank_shape, head, tank_mass):
tank_shape -> tank_mass,
tank_mass -> head

actuator pump(link_shape, flow, link_energy):
link_shape -> link_energy,
link_energy -> flow

# translation function
translation pipe -> junction:
pipe.flow -> junction.junction_mass,
junction.junction_mass -> pipe.flow,
junction.head->pipe.link_energy

translation junction -> pipe:
pipe.flow -> junction.junction_mass,
junction.junction_mass -> pipe.flow,
junction.head->pipe.link_energy

translation pipe -> tank:
pipe.flow -> tank.tank_mass,
tank.head -> pipe.link_energy

translation tank -> pipe:
pipe.flow -> tank.tank_mass,
tank.head -> pipe.link_energy

translation pump -> junction:
pump.flow -> junction.junction_mass,
junction.junction_mass -> pump.flow,
junction.head->pump.link_energy

translation junction -> pump:
pump.flow -> junction.junction_mass,
junction.junction_mass -> pump.flow,
junction.head->pump.link_energy

translation pump -> tank:
pump.flow -> tank.tank_mass

translation tank -> pump:
pump.flow -> tank.tank_mass

translation pipe -> demand:
pipe.flow -> demand.demand_mass,
demand.demand_mass -> pipe.flow,
demand.head->pipe.link_energy

translation demand -> pipe:
pipe.flow -> demand.demand_mass,
demand.demand_mass -> pipe.flow,
demand.head->pipe.link_energy

translation pump -> demand:
pump.flow -> demand.demand_mass,
demand.demand_mass -> pump.flow,
demand.head->pump.link_energy

translation demand -> pump:
pump.flow -> demand.demand_mass,
demand.demand_mass -> pump.flow,
demand.head->pump.link_energy
}
\end{lstlisting}

The agent repository used for the running example is defined by the 
following \iCPSDL code:
\begin{lstlisting}[language=iCPSDL]
repository wdn {
estimate junction_mass using
jmass = loop. producer1? flow. producer2? flow.
consumer1! flow. loop

estimate demand_mass using
dmass = loop. producer1? flow. consumer1! flow. loop

estimate tank_mass using
tmass = loop. producer1? flow. producer2? flow.
consumer1!head. loop

estimate link_energy using
lenergy = loop. producer1? head. producer2? head.
consumer1!flow. loop

sense head using headSensor = loop. consumer1! head. loop

sense flow using flowSensor = loop. consumer1! flow. loop

control pump using controller =
loop. producer1? head.
consumer1!signal { ON: loop } or { OFF: loop }

actuate pump using pumpActuator =
loop. producer1? signal { ON: loop } or { OFF: loop }
}
\end{lstlisting}

\end{document}